\documentclass[pdflatex,iicol,sn-mathphys,Numbered]{sn-jnl}
\usepackage{longtable}
\usepackage{amsmath}
\usepackage{amsthm}
\usepackage{amssymb}
\usepackage{silence}
\usepackage{caption}
\usepackage{subcaption}
\usepackage[dvipsnames,table]{xcolor}
\usepackage{amsfonts}
\usepackage{mathtools}
\usepackage{multirow}
\usepackage{url}
\newcommand{\burl}[1]{\url{#1}}
\usepackage{xspace}
\usepackage{type1cm}
\usepackage{natbib}
\usepackage[title]{appendix}
\usepackage{xcolor}
\usepackage{textcomp}
\usepackage{manyfoot}
\usepackage{algorithm}
\usepackage{algorithmicx}
\usepackage{algpseudocode}
\usepackage{listings}
\usepackage{colortbl}

\def\braket#1{\mathinner{\langle{#1}\rangle}}
\newcommand{\bra}[1]{\left\langle #1\right|}
\newcommand{\ket}[1]{\left|#1\right\rangle}
\newcommand{\QDAGer}{\textup{\textsc{QDAGer}}\xspace}
\newcommand{\defeq}{\vcentcolon=}

\definecolor{rankBlue}{RGB}{210,229,255}
\definecolor{rankGreen}{RGB}{217,242,208}
\definecolor{rankOrange}{RGB}{255,229,204}
\newcommand{\bestcell}[1]{\cellcolor{rankBlue}#1}
\newcommand{\secondcell}[1]{\cellcolor{rankGreen}#1}
\newcommand{\thirdcell}[1]{\cellcolor{rankOrange}#1}

\newcommand{\gap}{\mathrm{gap}}

\newtheorem{theorem}{Theorem}
\newtheorem{lemma}[theorem]{Lemma}

\newtheorem{proposition}[theorem]{Proposition}
\newtheorem{corollary}[theorem]{Corollary}
\def\<{\langle}
\def\>{\rangle}
\theoremstyle{definition}

\newtheorem*{remark}{Remark}
\usepackage{booktabs}
\usepackage{graphicx}

\begin{document}
\title[Transverse-Field Ising Model for Graph Learning]{On the Expressive Power of the Transverse-Field Ising Model for Graph Learning}

\newcommand{\orcidid}[1]{\href{https://orcid.org/#1}{\textsuperscript{\textcolor[HTML]{A6CE39}{\textbf{iD}}}}}

\author*[1]{\fnm{Mehdi} \sur{Djellabi}\orcidid{0000-0003-3537-7855}}\email{mehdi.djellabi@pasqal.com}
\author[1]{\fnm{Louis-Paul} \sur{Henry}\orcidid{0000-0002-2013-5215}}\email{louis-paul.henry@pasqal.com}

\affil*[1]{\orgname{Pasqal}, \orgaddress{\street{24 avenue Émile Baudot}, \city{Palaiseau}, \postcode{91120}, \country{France}}}
\maketitle

\begin{abstract}
{}We study the quantum evolution induced by graph-indexed Ising Hamiltonians as a source of structural signal for graph learning. Graph automorphisms preserve symmetries of the Hamiltonian, and these symmetries constrain the quantum evolution in a way that turns time-dependent local measurements into informative probes of graph structure. Leveraging this idea, we introduce \QDAGer, a quantum-inspired graph-pair Transformer that injects quantum-dynamical features from time series of node occupations and connected two-point correlators directly into the attention mechanism. We apply \QDAGer to learning Graph Edit Distance (GED), an NP-hard similarity measure, using either a direct permutation-invariant embedding discrepancy or an alignment-based surrogate loss. Experiments on multiple GED benchmarks under different edit cost settings show that the proposed dynamical features provide a stronger inductive bias than classical structural alternatives under the same training protocol. In addition, we report ablations where the dynamical signal is replaced by standard random-walk and heat-kernel features while keeping the architecture fixed, highlighting that the gain comes from the injected dynamics rather than model capacity alone.
\end{abstract}
\section{Introduction}
Learning on graphs is central to modern machine learning because many real-world systems are inherently relational, including molecules, proteins, code, social networks, and physical interaction networks~\cite{wu2021comprehensive,bronstein2021geometric,gilmer2017neural}. The key challenge is to build representations that capture meaningful structure while respecting permutation invariance to node relabeling~\cite{maron2019invariant}. This challenge is especially pronounced in graph comparison tasks, where success depends on detecting fine-grained differences while remaining robust to symmetries and repeated patterns.

A substantial body of work has studied the expressive power of graph models. Message-passing GNNs are effective in many applications, but their ability to distinguish non-isomorphic graphs is fundamentally limited by local aggregation schemes, which can fail when global structure must be resolved~\cite{xu2018how,morris2019wlgnn}. Graph Transformers provide a complementary mechanism: self-attention can relate distant nodes in a single layer and can, in principle, model global dependencies~\cite{dwivedi2020generalization,ying2021graphormer,vaswani2017attention}. In practice, however, attention over graphs does not automatically yield strong structural understanding. Without carefully designed positional encodings, Transformers may struggle to represent symmetries, and can behave like powerful models that lack a reliable notion of graph geometry~\cite{kreuzer2021rethinking,lim2023sign,dwivedi2022graph}. This motivates a recurring question: which structural signals should be injected into attention to obtain representations that are both informative and permutation-consistent?

In this work, we propose a \emph{dynamical} and \emph{tunable} structural signal derived from the quantum evolution of a graph-indexed transverse-field Ising Hamiltonian. Given an undirected graph \(\mathcal{G}=(\mathcal{V},\mathcal{E})\), we consider an Ising-type Hamiltonian whose interaction terms are indexed by edges, and whose transverse field and longitudinal field act on nodes.
Evolving a simple initial state and measuring local observables over time yields time series
that reflect how excitations, \textit{i.e.} nodes in the occupied (excited) state, and the corresponding correlations propagate on the graph. From a graph learning
viewpoint, these measurements act as multi-scale probes of structure: varying the evolution
time changes the effective range of interactions captured by the signal, while the
nonlinearity of the dynamics enriches the representational family beyond static encodings.
This perspective is related to recent quantum-feature approaches for graph learning based
on neutral-atom dynamics and graph-dependent observables~\cite{PhysRevA.104.032416,QuantPE,hetero_graph}.
Whereas these works typically use quantum-derived quantities as \emph{static} kernels or
pre-computed graph features consumed by a downstream classifier, and quantum-walk
approaches~\cite{kasture2025multiparticlequantumwalksdistinguishing} target graph
\emph{distinguishability} rather than a learnable metric, we instead inject time-resolved
node occupations and connected two-point correlators \emph{directly into the attention
mechanism} of a graph Transformer, so that the dynamical signal shapes a trainable,
alignment-aware comparison rather than a fixed feature map.

A key motivation for using this signal is its tight connection to graph symmetries. Graph automorphisms preserve symmetries of the Hamiltonian, meaning that permuting nodes by an automorphism leaves the dynamics unchanged. Consequently, the evolution and the resulting observables are naturally permutation-consistent and symmetry-aware. Intuitively, this provides a mechanism for producing features that respond to ``structural roles'' rather than to arbitrary labels: nodes that are equivalent under the automorphism group are constrained to share the same dynamical behavior, while nodes in different structural roles can exhibit distinct time-dependent signatures. This makes quantum-dynamical observables attractive candidates for injecting structural bias into attention, in a way that is operationally tied to the input graph.

We leverage these ideas to introduce \QDAGer (\textbf{Q}uantum \textbf{D}ynamics-based \textbf{A}ttention \textbf{G}raph transform\textbf{er}), a graph-pair Transformer designed for graph comparison. For each graph $\mathcal{G}$, we emulate its quantum evolution and record node-level time series of occupations $\langle \hat n_u(t)\rangle$ and pair-level time series of connected correlators $\langle \hat n_u\hat n_v(t)\rangle - \langle \hat n_u(t)\rangle\langle \hat n_v(t)\rangle$ at a discrete set of sampling times. Given a training pair $(\mathcal{G}_1,\mathcal{G}_2)$, \QDAGer forms combinations of these signals to produce attention logits over both nodes and node pairs. 
The model then performs coupled updates: pair representations are updated through gated interactions driven by node-derived queries and keys, and node representations aggregate both node values and attention-weighted information coming from updated pair features. In this way, \QDAGer integrates dynamical information at both the node and pair level directly into the attention mechanism, rather than treating it as a separate positional encoding.

We evaluate \QDAGer on learning Graph Edit Distance (GED), a widely used similarity measure defined as the minimum-cost sequence of node and edge edit operations required to transform one graph into another. GED is NP-hard to compute exactly~\cite{GEDComplexity}, making it a challenging benchmark that requires capturing both global structure and fine-grained local differences while remaining invariant to node permutations. We consider two training strategies. The first is a direct permutation-invariant embedding discrepancy (a ``blindfold'' setting) that regresses a distance between the two sets of learned embeddings to the true GED. The second is an alignment-based surrogate loss, following the neural set divergence framework of~\cite{JainNeuralGED}, in which a learned soft alignment produces doubly-stochastic transport plans over nodes and pairs, yielding a differentiable proxy for edit-based comparison.

Empirically, we evaluate on multiple GED benchmarks under both uniform and non-uniform edit cost regimes. To isolate the contribution of the proposed signal, we include ablations that replace the quantum-dynamical inputs with classical structural alternatives such as relative random-walk and heat-kernel features, while keeping the architecture and training protocol fixed. Across datasets where sufficient training pairs are available, \QDAGer with quantum-dynamical attention achieves consistently stronger performance than these classical variants, suggesting that the quantum dynamics provide a useful inductive bias for graph comparison with Transformers.

\section{Graph-Indexed Hamiltonian Dynamics}
\label{sec:theory}
\subsection{Mapping graphs to Hamiltonians}
\label{subsec:graph_to_ham}
We analyze graph-indexed Ising Hamiltonian dynamics as a symmetry-consistent mechanism for probing graph structure. Throughout this section, we consider an undirected, unweighted graph
$
\mathcal{G}=(\mathcal{V},\mathcal{E})
$
with \(|\mathcal{V}|=N\).

Each vertex \(u\in\mathcal V\) is associated with a two-level quantum system. The two local states are denoted by \(|0\rangle\) and \(|1\rangle\), and the corresponding two-dimensional Hilbert space is
$
\mathrm{span}\{|0\rangle,|1\rangle\}.
$
We denote by \(\widehat\sigma_u^x\) and \(\widehat\sigma_u^z\) the Pauli operators acting on the qubit indexed by node \(u\). With the convention that \(|0\rangle\) and \(|1\rangle\) are eigenvectors of \(\widehat\sigma^z\) with respective eigenvalues \(1\) and \(-1\), we define the occupation-number operator
$
\widehat n_u
=
\frac12\bigl(\mathbb I-\widehat\sigma_u^z\bigr).
$

The Hamiltonian used throughout this work is
\begin{equation}
\label{eq:ham}
\widehat H_{\mathcal{G}}
=
\sum_{(u,v)\in\mathcal{E}}
\widehat n_u\widehat n_v
+
\sum_{u\in\mathcal{V}}
\Omega\,\widehat\sigma_u^x
-
\delta\,\widehat n_u .
\end{equation}
The first term couples occupations along the edges of \(\mathcal G\), the second term is a transverse field, and the third term is a longitudinal field written in occupation-number form. In this work, the parameters \(\Omega\) and \(\delta\) are kept constant in time.

Equation~\eqref{eq:ham} is an occupation-number parametrization of a transverse-field Ising Hamiltonian indexed on the graph \(\mathcal G\). Indeed, using
$
\widehat n_u=\frac12(\mathbb I-\widehat\sigma_u^z),
$
we obtain :
$
\widehat n_u\widehat n_v
=
\frac14
\left(
\mathbb I
-
\widehat\sigma_u^z
-
\widehat\sigma_v^z
+
\widehat\sigma_u^z\widehat\sigma_v^z
\right).
$
Consequently, up to an additive scalar multiple of the identity, Eq.~\eqref{eq:ham} can be
rewritten as
\begin{multline}
\widehat H_{\mathcal{G}}
=
\Omega\sum_{u\in\mathcal V}\widehat\sigma_u^x
+
\frac14\sum_{(u,v)\in\mathcal E}
\widehat\sigma_u^z\widehat\sigma_v^z
\\
+
\sum_{u\in\mathcal V}
\left(
\frac{\delta}{2}
-
\frac{\deg(u)}{4}
\right)
\widehat\sigma_u^z
+
\mathrm{const.}
\end{multline}
Where $\deg(u)$ denotes the degree, \textit{i.e.} the number of neighbors of node $u$. The additive constant only contributes a global phase to the quantum evolution and therefore does not affect the expectation-value dynamics of the observables used below. 

The occupation-number form of Eq.~\eqref{eq:ham} is also naturally motivated by neutral-atom Rydberg physics. In physical neutral-atom devices, the states \(|0\rangle\) and \(|1\rangle\) can be interpreted as the absence or presence of a Rydberg excitation, and the corresponding van der Waals Hamiltonian is of the form
\begin{equation}
\label{eq:phys_motivation}
\begin{split}
\widehat H_{\mathrm{phys}}(t)
=&
\sum_{u=1}^{N}
\frac{\Omega(t)}{2}\widehat\sigma_u^x
-
\sum_{u=1}^{N}
\delta(t)\widehat n_u\\
&+
\sum_{1\leq u<v\leq N}
\frac{C_6}{r_{uv}^6}
\widehat n_u\widehat n_v,
\end{split}
\end{equation}
where \(r_{uv}\) is the distance between atoms \(u\) and \(v\), and \(C_6\) depends on the chosen Rydberg level. 

For graph classes admitting suitable geometric embeddings, such as unit-disk graphs, blockade physics can approximate edge-indexed interaction patterns by placing adjacent vertices within an effective blockade radius and non-adjacent vertices farther apart~\cite{pichler2018quantum,Henriet2020quantumcomputing}. However, the theoretical and learning constructions in this work do not rely on such a geometric realization. We treat Eq.~\eqref{eq:ham} as an abstract Hamiltonian whose interaction graph is the input graph itself. This distinction is important because several graphs in the experimental datasets are not guaranteed to admit a direct embedding on currently available neutral-atom hardware. Consequently, all numerical results reported in this work are obtained through emulation of Eq.~\eqref{eq:ham}, rather than through execution on neutral-atom hardware. Accordingly, we position \QDAGer as a \emph{quantum-inspired} method: the Ising dynamics act as a physically motivated, tunable feature generator, and we make no claim of quantum computational advantage over classical methods.
\subsection{Graph and wavefunction symmetries}
\label{subsec:graph_automorphisms_wavefunction_symmetries}
Let
$
\Gamma:=\mathrm{Aut}(\mathcal{G})
$
be the automorphism group of \(\mathcal{G}\). Each \(g\in\Gamma\) acts on nodes by
\(u\mapsto g\star u\), and this action is represented on the Hilbert space by a unitary
\(U_g\) permuting qubit labels. Since \(g\) preserves the edge set, the Hamiltonian is
invariant under this action:
$
U_g\widehat H_{\mathcal{G}}U_g^\dagger
=
\widehat H_{\mathcal{G}},
\;\;
\forall g\in\Gamma.
$
Thus the graph automorphisms are also symmetries of the quantum dynamics.

As in~\cite{PhysRevA.104.032416}, we associate to \(\mathcal{G}\) the time-evolved
quantum state
$
|\psi_{\mathcal{G}}(t)\rangle
=
e^{-it\widehat H_{\mathcal{G}}}|\psi_0\rangle,
$
living in the \(2^N\)-dimensional Hilbert space \(\mathcal H\). By Maschke's theorem, the
representation of \(\Gamma\) on \(\mathcal H\) decomposes the latter into a direct sum of
irreducible \(\Gamma\)-invariant subspaces~\cite{fulton1991representation}. We focus on
the trivial symmetry sector, denoted \(\mathcal H_0\), consisting of states invariant under
all automorphisms:
$
U_g|\psi\rangle=|\psi\rangle,
\;\; g\in\Gamma.
$
Since \(\widehat H_{\mathcal G}\) commutes with every \(U_g\), the sector \(\mathcal H_0\)
is invariant under the dynamics. Therefore, if \(|\psi_0\rangle\in\mathcal H_0\), then
$
|\psi_{\mathcal{G}}(t)\rangle\in\mathcal H_0,
\;\;
\forall t\in\mathbb R.
$

We use
$
\bigl\{|\boldsymbol b\rangle
:=
\bigotimes_{u=1}^N |b_u\rangle
\bigr\}_{\boldsymbol b\in\{0,1\}^N}
$
as the computational basis of \(\mathcal H\). Each bitstring
\(\boldsymbol b\in\{0,1\}^N\) naturally corresponds to the subset
$
S_{\boldsymbol b}
:=
\{u\in\mathcal V : b_u=1\},
$
and therefore to the induced subgraph \(\mathcal G[S_{\boldsymbol b}]\). The automorphism
group acts on \(S_{\boldsymbol b}\) by
$
g\star S_{\boldsymbol b}
:=
\{g\star u:u\in S_{\boldsymbol b}\},
$
and hence also on bitstrings. In particular, the all-zero state
\(|0\rangle^{\otimes N}\) belongs to the trivial symmetry sector \(\mathcal H_0\).

In this sector, basis amplitudes are constant on automorphism orbits: if two bitstrings
\(\boldsymbol b\) and \(\boldsymbol b'\) belong to the same \(\Gamma\)-orbit, then their
time-evolving coefficients in \(|\psi_{\mathcal G}(t)\rangle\) are identical whenever the
initial state lies in \(\mathcal H_0\). This orbit structure is the mechanism through which
graph symmetries constrain the wavefunction dynamics. In particular, \(k\)-body
occupation observables can probe orbits of \(k\)-vertex subsets, and therefore reveal
symmetric substructures of the input graph. Detailed derivations of these coefficient
equalities and of the corresponding orbit basis are provided in
Appendix~\ref{app:basis_details_2}.
\subsection{Theoretical guarantees}
\label{subsec:problem_statement_theory}
The aim of this section is to capture the symmetry-distinguishing content encoded in the expectation-value dynamics of occupation observables. Rather than relying on a sequential procedure as in Weisfeiler--Leman-type algorithms, we study the full time-dependent signal corresponding to the quantum dynamics of Hamiltonians that are indexed on graphs. The main result below shows that, within the trivial symmetry sector, occupation observables associated with distinct subset orbits yield generically distinct expectation-value trajectories. The guarantee is stated for Haar-random initial quantum states in $\mathcal{H}_0$, which rules out (sparse) pathological initial configurations.

Let $S,S'\subseteq\mathcal V$ two subsets belonging to distinct orbits under the induced action of $\Gamma$ on $2^{\mathcal V}$. Define the occupation observables
\begin{equation}
\widehat{O}_S
	:=
	\prod_{u\in S}\widehat n_u
\;\; \text{and} \;\;
\widehat O_{S'}
	:=
	\prod_{u\in S'}\widehat n_u,
\end{equation}
with the convention that the empty product is the identity. Let $\mu_{\mathrm{Haar}}$ be the normalized Haar measure on $\mathbb S(\mathcal H_0)$, the unit sphere of $\mathcal H_0$, whose elements represent normalized $\Gamma$-invariant pure states up to global phase. Then one can state the following :

\begin{theorem}[Distinguishability in the trivial symmetry sector]
\label{thm:main}
For $|\psi_0\rangle\in\mathbb S(\mathcal H_0)$, define
\begin{multline}
f_{S,S'}(t)
:=
\big\langle \psi_0\big|
e^{it\widehat H_{\mathcal G}}
\big(\widehat O_S-\widehat O_{S'}\big)
e^{-it\widehat H_{\mathcal G}}
\big|\psi_0\big\rangle,
\\
t\in\mathbb R.
\end{multline}
Then, for $\mu_{\mathrm{Haar}}$-almost every $|\psi_0\rangle\in\mathbb S(\mathcal H_0)$:
$f_{S,S'}\not\equiv 0$.
\newline Consequently,
$\lambda_1\bigl(\{t\in\mathbb R:f_{S,S'}(t)=0\}\bigr)=0$,
where $\lambda_1$ denotes the one-dimensional Lebesgue measure.
\end{theorem}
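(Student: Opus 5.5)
The plan is to reduce the almost-sure statement to a polynomial non-vanishing argument evaluated at the single time $t=0$, and then use analyticity in $t$ for the Lebesgue-measure consequence. Write $\widehat D:=\widehat O_S-\widehat O_{S'}$ and let $P_0$ be the orthogonal projector onto $\mathcal H_0$. Since $|\psi_0\rangle\in\mathcal H_0$, we have $f_{S,S'}(0)=\langle\psi_0|P_0\widehat D P_0|\psi_0\rangle$. The key observation is that $f_{S,S'}\not\equiv0$ holds as soon as $f_{S,S'}(0)\neq 0$, so it suffices to show that $q(\psi_0):=\langle\psi_0|P_0\widehat DP_0|\psi_0\rangle$ is nonzero for Haar-almost every $\psi_0$.

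\textbf{Step 1: $q$ is not identically zero on $\mathcal H_0$.} I would build explicit orbit states. For a subset $T$, let $|\mathcal O_T\rangle$ be the normalized uniform superposition of the basis states $|\boldsymbol b\rangle$ with $S_{\boldsymbol b}\in\Gamma\star T$. This state is $\Gamma$-invariant, hence lies in $\mathcal H_0$. The operator $\widehat O_R$ is diagonal in the computational basis, with eigenvalue $1$ on $|\boldsymbol b\rangle$ if $R\subseteq S_{\boldsymbol b}$ and $0$ otherwise. Assume without loss of generality that $|S|\le|S'|$; otherwise swap the roles, which only flips the sign of $f$. Take $T=S$.
\begin{itemize}
\item Every $S_{\boldsymbol b}$ in the orbit of $S$ has cardinality $|S|$. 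Hence $S\subseteq S_{\boldsymbol b}$ forces $S_{\boldsymbol b}=S$, and $\langle\mathcal O_S|\widehat O_S|\mathcal O_S\rangle=1/|\Gamma\star S|>0$.
\item If $|S'|>|S|$, then $S'\not\subseteq S_{\boldsymbol b}$ for cardinality reasons.
\item If $|S'|=|S|$, then $S'\subseteq g\star S$ would force $S'=g\star S$, contradicting that $S$ and $S'$ lie in distinct orbits.
\end{itemize}
So $\langle\mathcal O_S|\widehat O_{S'}|\mathcal O_S\rangle=0$, and therefore $q(\mathcal O_S)\neq0$. This combinatorial step is the heart of the argument, and the distinct-orbit hypothesis enters exactly here.

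\textbf{Step 2: the zero set of $q$ is Haar-null.} Identify $\mathcal H_0\cong\mathbb C^d\cong\mathbb R^{2d}$. Because $P_0\widehat DP_0$ is Hermitian, $q$ is a real homogeneous quadratic polynomial in the real coordinates. By Step 1 it is not the zero polynomial, so its zero set $Z\subset\mathbb R^{2d}$ has Lebesgue measure zero. Since $Z$ is a cone, polar coordinates show that $Z\cap\mathbb S(\mathcal H_0)$ has zero surface measure, i.e.\ $\mu_{\mathrm{Haar}}$-measure zero. The value of $q$ is also invariant under global phase, so this is consistent with working on the projective state space. Hence, for $\mu_{\mathrm{Haar}}$-almost every $\psi_0$, we have $f_{S,S'}(0)\neq0$, and in particular $f_{S,S'}\not\equiv0$.

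\textbf{Step 3: the Lebesgue-measure consequence.} Diagonalize $\widehat H_{\mathcal G}=\sum_k E_k|k\rangle\langle k|$. Then
\[
f_{S,S'}(t)=\sum_{k,l}\overline{c_k}\,c_l\,\langle k|\widehat D|l\rangle\,e^{i(E_k-E_l)t},
\qquad c_k:=\langle k|\psi_0\rangle,
\]
which is a finite exponential sum and extends to an entire function of $t$. A nonzero real-analytic function on $\mathbb R$ has isolated zeros, so its zero set is countable and has $\lambda_1$-measure zero.

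I do not expect a genuine obstacle. The only delicate point is the orbit and cardinality bookkeeping in Step 1, in particular choosing the orbit state of the smaller set so that the other observable vanishes exactly. The measure-theoretic passage from the nonzero polynomial to a Haar-null set on the sphere is standard.
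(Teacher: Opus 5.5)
Your proof is correct, but after the combinatorial step it takes a different and shorter route than the paper.

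The combinatorics are the same. The paper also evaluates $\widehat\Delta_0:=\widehat P_0(\widehat O_S-\widehat O_{S'})\widehat P_0$ on uniform orbit states, arguing that at least one of the two cross-expectations vanishes; your normalization $|S|\le|S'|$ settles in advance which orbit state works.

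The paper then expands $f_{S,S'}(t)=\sum_{\omega}e^{it\omega}\langle\psi_0|\widehat\Delta_\omega|\psi_0\rangle$ over the energy gaps of the sector Hamiltonian, with $\widehat\Delta_\omega=\sum_{E-E'=\omega}\widehat P_E\widehat\Delta_0\widehat P_{E'}$. It picks $\omega_\star$ with $\widehat\Delta_{\omega_\star}\neq0$ and shows that this single Fourier coefficient is Haar-generically nonzero. Because $\widehat\Delta_{\omega_\star}$ is in general not Hermitian ($\widehat\Delta_\omega^\dagger=\widehat\Delta_{-\omega}$), this needs the complex polarization identity and a split into real and imaginary parts.

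You instead observe that $f_{S,S'}(0)$ is the sum of all these coefficients, namely the real quadratic form $\langle\psi_0|\widehat\Delta_0|\psi_0\rangle$. You have already shown this form is nonzero at an explicit point, so the spectral decomposition and the polarization step become unnecessary.

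Your route also proves more. It never uses $[\widehat H_{\mathcal G},U_g]=0$; it needs only $|\psi_0\rangle\in\mathcal H_0$, Hermiticity and finite dimension. It also gives $f_{S,S'}(0)\neq0$ almost surely. This makes clear that the Haar-generic separation comes from the initial state rather than from the dynamics. That is why the theorem says nothing about $|0\rangle^{\otimes N}$, where $f_{S,S'}(0)=0$ for nonempty $S,S'$.

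What the paper's frequency expansion buys in return is an exact criterion for a fixed initial state: $f_{S,S'}\equiv0$ if and only if $\langle\psi_0|\widehat\Delta_\omega|\psi_0\rangle=0$ for every gap $\omega$. Your $t=0$ test is only sufficient. The paper's criterion is the relevant tool for non-generic initializations such as those in its remark on the accessible subspace $\mathcal K(\psi_0)$.
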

 In other terms, this result states that for Haar-random initial states in $\mathcal H_0$, two occupation observables associated with distinct subset orbits have expectation-value trajectories that coincide only on a set of times that has zero Lebesgue measure. Importantly, the two observables remain distinct after compression to the dynamically accessible symmetry sector.

Note that Theorem~\ref{thm:main} is not strictly restricted for the Hamiltonian described in Eq.~\eqref{eq:ham}. Indeed the proof (\textit{cf.} Appendix~\ref{app:proof_main}) only requires finite dimensionality and automorphism invariance of the Hamiltonian, namely $[\widehat H_{\mathcal G},U_g]=0$ for every $g\in\Gamma$. Thus, this result is more generally a symmetry-sector statement for graph-indexed Hamiltonian dynamics. 

We now explain how this orbit-level distinguishability connects to graph isomorphism. The key point is that, for connected graphs, isomorphism can be witnessed inside the automorphism group of the disjoint union by a symmetry that mixes the two connected components. The following result describes when such component-mixing symmetries can occur.

Given two simple connected graphs $\mathcal{G}_1=(\mathcal{V}_1,\mathcal{E}_1)$ and $\mathcal{G}_2=(\mathcal{V}_2,\mathcal{E}_2)$ on disjoint vertex sets, we fix the notation
\begin{equation}
\label{eq:def_G12}
\mathcal{G}_{1,2}:=\mathcal{G}_1\cup \mathcal{G}_2,
\qquad
\Gamma_{1,2}:=\mathrm{Aut}(\mathcal{G}_{1,2}).
\end{equation}

\begin{theorem}[Automorphisms of disconnected unions,~\cite{Frucht_49}]
\label{thm:frucht}
Let $\mathcal{G}_1=(\mathcal{V}_1,\mathcal{E}_1)$ and $\mathcal{G}_2=(\mathcal{V}_2,\mathcal{E}_2)$ be two simple connected graphs with disjoint vertex sets, and let $\mathcal{G}_{1,2}$ be defined as in Eq.~\eqref{eq:def_G12}. If $\mathcal{G}_1\not\simeq \mathcal{G}_2$, then every automorphism of $\mathcal{G}_{1,2}$ preserves each connected component, and therefore
$
\mathrm{Aut}(\mathcal{G}_{1,2}) \simeq \mathrm{Aut}(\mathcal{G}_1)\times \mathrm{Aut}(\mathcal{G}_2).
$
If $\mathcal{G}_1\simeq \mathcal{G}_2$, then $\mathrm{Aut}(\mathcal{G}_{1,2})$ contains automorphisms exchanging the two connected components.
\end{theorem}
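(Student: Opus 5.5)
The plan is to use the fact that an automorphism $\phi$ of $\mathcal G_{1,2}$ maps connected components onto connected components. I take this as a standard preliminary: $\phi$ is a bijection preserving adjacency and non-adjacency, so it maps paths to paths. The image of a connected set is therefore connected, and applying the same reasoning to $\phi^{-1}$ shows that each component is sent onto a whole component. Since $\mathcal G_1$ and $\mathcal G_2$ are connected and their vertex sets are disjoint, $\mathcal G_{1,2}$ has exactly two components, $\mathcal V_1$ and $\mathcal V_2$. Consequently $\phi$ either fixes each of them setwise or swaps them.

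\emph{Non-isomorphic case.} Suppose $\phi(\mathcal V_1)=\mathcal V_2$. Then the restriction $\phi|_{\mathcal V_1}:\mathcal V_1\to\mathcal V_2$ is a bijection with $(u,v)\in\mathcal E_1$ if and only if $(\phi u,\phi v)\in\mathcal E_{1,2}$. Because both endpoints lie in $\mathcal V_2$ and there are no edges between the components, the latter holds if and only if $(\phi u,\phi v)\in\mathcal E_2$. So $\phi|_{\mathcal V_1}$ is an isomorphism $\mathcal G_1\simeq\mathcal G_2$, which contradicts the hypothesis. Hence every automorphism preserves each component.

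The map $\phi\mapsto(\phi|_{\mathcal V_1},\phi|_{\mathcal V_2})$ is then well defined into $\mathrm{Aut}(\mathcal G_1)\times\mathrm{Aut}(\mathcal G_2)$, by the same edge-restriction argument, and it is clearly a group homomorphism. It is injective because the two vertex sets partition $\mathcal V_1\cup\mathcal V_2$. It is surjective because any pair $(\alpha,\beta)$ glues to a bijection $\alpha\sqcup\beta$. This glued map preserves edges inside each component because $\alpha$ and $\beta$ do, and it preserves the absence of cross edges because it fixes the components. This gives the claimed isomorphism of groups.

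\emph{Isomorphic case.} Take an isomorphism $\sigma:\mathcal G_1\to\mathcal G_2$ and define $\tau=\sigma\sqcup\sigma^{-1}$ on $\mathcal V_1\cup\mathcal V_2$. It is a bijection that sends edges of $\mathcal E_1$ to edges of $\mathcal E_2$ and vice versa. Non-edges inside a component are mapped to non-edges, and cross pairs are mapped to cross pairs, which are non-edges on both sides. So $\tau\in\Gamma_{1,2}$ and it exchanges the components.

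The only step requiring care is the preliminary claim that automorphisms permute components. I would handle it via connectivity and path preservation applied to both $\phi$ and $\phi^{-1}$. Everything else is routine bookkeeping.
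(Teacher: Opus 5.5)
Your proof is correct and follows essentially the same route as the paper. The paper cites Frucht for this theorem, but its Appendix~\ref{appx:cor} argument uses exactly your ingredients: automorphisms map connected components onto connected components, a component-swapping automorphism restricts to an isomorphism $\mathcal G_1\to\mathcal G_2$, and $\phi\sqcup\phi^{-1}$ gives an exchanging automorphism when $\mathcal G_1\simeq\mathcal G_2$. Your explicit check that $\phi\mapsto(\phi|_{\mathcal V_1},\phi|_{\mathcal V_2})$ is a bijective homomorphism supplies the direct-product claim, which the paper leaves to the citation.
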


In our case, we will use an equivalent (and handy) orbit-level formulation of Theorem~\ref{thm:frucht}, with $\Gamma_{1,2}$ as in Eq.~\eqref{eq:def_G12}.
\begin{corollary}[Orbit-crossing criterion]
\label{cor:corolary_main}
Let $\mathcal{G}_1$ and $\mathcal{G}_2$ be two simple connected graphs on disjoint vertex sets. Then $\mathcal{G}_1\simeq\mathcal{G}_2$ if and only if there exists an orbit of $\Gamma_{1,2}$ containing at least one vertex from $\mathcal{V}_1$ and at least one vertex from $\mathcal{V}_2$.
\end{corollary}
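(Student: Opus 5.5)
We prove the two directions separately, taking Theorem~\ref{thm:frucht} as given and using only one extra structural fact: every automorphism of $\mathcal G_{1,2}$ maps connected components onto connected components. Automorphisms preserve adjacency, hence paths, hence connectivity. Since $\mathcal G_1$ and $\mathcal G_2$ are connected and disjoint, the connected components of $\mathcal G_{1,2}$ are exactly $\mathcal V_1$ and $\mathcal V_2$. Consequently, for any $g\in\Gamma_{1,2}$, either $g(\mathcal V_1)=\mathcal V_1$ and $g(\mathcal V_2)=\mathcal V_2$, or $g(\mathcal V_1)=\mathcal V_2$ and $g(\mathcal V_2)=\mathcal V_1$.

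\emph{($\Rightarrow$)} Suppose $\mathcal G_1\simeq\mathcal G_2$. By the second part of Theorem~\ref{thm:frucht}, $\Gamma_{1,2}$ contains an automorphism $g$ exchanging the two components. We can also build it directly. Take an isomorphism $\varphi:\mathcal V_1\to\mathcal V_2$ and define $g$ as $\varphi$ on $\mathcal V_1$ and $\varphi^{-1}$ on $\mathcal V_2$. This $g$ preserves the edges inside each component, and there are no edges between components to check. Now pick any $u\in\mathcal V_1$. Then $g\star u\in\mathcal V_2$, so the orbit $\Gamma_{1,2}\star u$ contains a vertex of $\mathcal V_1$ and a vertex of $\mathcal V_2$.

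\emph{($\Leftarrow$)} We argue by contraposition. Suppose $\mathcal G_1\not\simeq\mathcal G_2$. By the first part of Theorem~\ref{thm:frucht}, every $g\in\Gamma_{1,2}$ preserves each component, so $g(\mathcal V_1)=\mathcal V_1$. The orbit of $u\in\mathcal V_1$ is therefore contained in $\mathcal V_1$, and likewise the orbit of any vertex of $\mathcal V_2$ stays in $\mathcal V_2$. Hence no orbit crosses. Equivalently, suppose some orbit contains $u\in\mathcal V_1$ and $v\in\mathcal V_2$. Then there is $g\in\Gamma_{1,2}$ with $g\star u=v$. By the dichotomy above, $g$ must swap the components, so $g|_{\mathcal V_1}:\mathcal V_1\to\mathcal V_2$ is a bijection preserving adjacency and non-adjacency. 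This restriction is an isomorphism $\mathcal G_1\simeq\mathcal G_2$.

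There is no serious obstacle here. The only point that needs care is the component dichotomy in the first paragraph. It relies on connectedness: without it, a component of $\mathcal G_1$ could be mapped to an isomorphic component of $\mathcal G_2$ without the two graphs being isomorphic. The dichotomy also needs the image of $\mathcal V_1$ to be an entire component, which holds because $g$ is a bijection and maps components onto components.
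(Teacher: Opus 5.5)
Your proof is correct and follows essentially the same route as the paper: the forward direction uses the same component-swapping automorphism built from $\varphi$ on $\mathcal V_1$ and $\varphi^{-1}$ on $\mathcal V_2$, and your direct argument for the reverse direction is the paper's argument that automorphisms map connected components onto connected components. Your alternative of citing the first part of Theorem~\ref{thm:frucht} for the reverse direction is also valid, but the paper's proof does not invoke it.
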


Indeed, if $\mathcal{G}_1\simeq\mathcal{G}_2$ then Theorem~\ref{thm:frucht} provides an automorphism exchanging the components, so the orbit of any vertex intersects both components. Conversely, an orbit intersecting both components contains $v\in\mathcal{V}_1$ and $\sigma(v)\in\mathcal{V}_2$ for some $\sigma\in\Gamma_{1,2}$, which forces $\sigma$ to map the connected component $\mathcal{V}_1$ onto $\mathcal{V}_2$, hence yields an isomorphism $\mathcal{G}_1\simeq\mathcal{G}_2$.
The proof (\textit{cf.} Appendix~\ref{appx:cor}) follows directly from Theorem~\ref{thm:frucht} and the fact that automorphisms preserve connected components.

Because $\mathcal G_{1,2}:=\mathcal G_1\cup\mathcal G_2$ has vertex set $\mathcal V_{1,2}=\mathcal V_1\sqcup \mathcal V_2$ and edge set $\mathcal E_{1,2}=\mathcal E_1\sqcup \mathcal E_2$, the Hamiltonian in Eq.~\eqref{eq:ham} contains no interaction term coupling a qubit in $\mathcal V_1$ to a qubit in $\mathcal V_2$. Identifying the Hilbert space as $\mathcal H_{\mathcal G_{1,2}}\simeq \mathcal H_{\mathcal G_1}\otimes \mathcal H_{\mathcal G_2}$ (tensor product over qubits in $\mathcal V_1$ and $\mathcal V_2$), this yields the separable decomposition
\begin{equation}
\widehat H_{\mathcal G_1\cup\mathcal G_2}
=
\widehat H_{\mathcal G_1}\otimes I
+
I\otimes \widehat H_{\mathcal G_2},
\end{equation}
where the two summands act on disjoint sets of qubits and therefore commute. Consequently, the propagator factorizes as
$
e^{-it\widehat H_{\mathcal G_1\cup\mathcal G_2}}
=
e^{-it\widehat H_{\mathcal G_1}}\otimes e^{-it\widehat H_{\mathcal G_2}},
$
so the dynamics on the disjoint union factorize for product initializations 
\begin{multline}
\label{eq:initial_prod_state}
    |\psi_0\rangle=|\psi^{(1)}_0\rangle\otimes|\psi^{(2)}_0\rangle
    \;\;\text{where }
    \\
    |\psi_0^{(1)}\rangle\in \mathcal H_0(\mathcal{G}_1),\;\;
    |\psi_0^{(2)}\rangle\in \mathcal H_0(\mathcal{G}_2).
\end{multline}
Thus, vertex-level occupation dynamics can be compared independently across $\mathcal G_1$ and $\mathcal G_2$. Applying Theorem~\ref{thm:main} to singleton subsets yields a generic separation of cross-component vertex orbits whenever the two graphs are not isomorphic.

Let  $\mathcal{G}_1$, $\mathcal{G}_2$  be two non-isomorphic graphs, and $\mathcal{G}_{1,2}$ be defined as in Eq.~\eqref{eq:def_G12}. Assuming that the initial state on the disjoint union factorizes as in Eq.~\eqref{eq:initial_prod_state}, we have:

\begin{proposition}[Measurement complexity]
\label{thm:stat_test}
For $\mu_{\mathrm{Haar}}^{(1)}\otimes\mu_{\mathrm{Haar}}^{(2)}$-almost every such product initial state, and for almost every $t_0>0$, the gap at $t_0$ :
$
\mathrm{gap}(t_0)
:=
\min_{u\in \mathcal{V}_1,\;v\in \mathcal{V}_2}
\left|
\langle \widehat n_u(t_0)\rangle
-
\langle \widehat n_v(t_0)\rangle
\right|,
$
is strictly positive. Consequently, given a confidence level $1-\alpha$, the two graphs can be distinguished from vertex-occupation measurements using
$
O\!\left(
		\log\!\left(\frac{N}{\alpha}\right)
		\gap(t_0)^{-2}
	\right)
$
measurement shots where $N=|\mathcal V_1|+|\mathcal V_2|$.
\end{proposition}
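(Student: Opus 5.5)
The plan has two parts. First, show that $\gap(t_0)>0$ generically, by reducing to Theorem~\ref{thm:main} applied to the disjoint union and using analyticity in $t$. Then convert the positive gap into a shot count with a Hoeffding bound and a union bound.

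\textbf{Reduction to Theorem~\ref{thm:main}.} Fix $u\in\mathcal V_1$ and $v\in\mathcal V_2$. Since $\mathcal G_1\not\simeq\mathcal G_2$, Theorem~\ref{thm:frucht} and Corollary~\ref{cor:corolary_main} imply that $\{u\}$ and $\{v\}$ lie in distinct $\Gamma_{1,2}$-orbits, where $\Gamma_{1,2}\simeq\mathrm{Aut}(\mathcal G_1)\times\mathrm{Aut}(\mathcal G_2)$. Because the propagator factorizes, for a product state
\[
\langle\widehat n_u(t)\rangle-\langle\widehat n_v(t)\rangle=a_u(t)-b_v(t),
\]
where $a_u(t)=\langle\psi^{(1)}_0|e^{it\widehat H_{\mathcal G_1}}\widehat n_u e^{-it\widehat H_{\mathcal G_1}}|\psi^{(1)}_0\rangle$ and $b_v$ is defined analogously on $\mathcal G_2$.

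Theorem~\ref{thm:main} is stated for Haar measure on $\mathbb S(\mathcal H_0(\mathcal G_{1,2}))$. The product measure is supported on the set of product states, which is Haar-null there, so I cannot quote the theorem verbatim. Instead I would rerun its proof mechanism, which is the polynomial-nonvanishing argument.

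\textbf{Nonvanishing for almost every product state.} The map $(\psi^{(1)},\psi^{(2)})\mapsto f_{u,v}(t)$ is, for each fixed $t$, a real polynomial in the real and imaginary parts of the coefficients of $\psi^{(1)}$ and $\psi^{(2)}$ (a sum of two quadratic forms, one in each factor). Hence the set of pairs for which $f_{u,v}\equiv0$ is the common zero set of a family of real polynomials. A real polynomial that does not vanish identically on a product of spheres has a null zero set, so it suffices to exhibit a single product pair with $f_{u,v}\not\equiv0$.

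Given $f_{u,v}\equiv 0$ for all product states, a polarization argument applies: varying $\psi^{(2)}$ alone shows that $b_v$ is independent of $\psi^{(2)}$, which forces the compression of $\widehat n_v(t)$ to $\mathcal H_0(\mathcal G_2)$ to be a scalar $c(t)\mathbb I$. The same reasoning gives $a_u(t)=c(t)$ on $\mathcal H_0(\mathcal G_1)$. At $t=0$, evaluating on $|0\cdots0\rangle$ gives $c(0)=0$. Evaluating on the normalized symmetrized all-ones state gives $c(0)=1$, a contradiction. (Alternatively, I would use the compressed-operator inequality from the proof of Theorem~\ref{thm:main}.)

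So for almost every product initial state, each of the finitely many $N_1N_2$ functions $f_{u,v}$ is nonzero. Each is real-analytic in $t$, being a finite sum of oscillatory exponentials from the spectral decomposition, so its zeros are isolated. A finite union of such zero sets is countable, and therefore $\gap(t_0)>0$ for almost every $t_0>0$.

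\textbf{Shot complexity.} At time $t_0$, measure in the computational basis $M$ times. Each $\widehat n_w$ outcome is a Bernoulli variable with mean $p_w=\langle\widehat n_w(t_0)\rangle$. By Hoeffding, $|\hat p_w-p_w|<\gap/2$ with failure probability at most $2e^{-M\gap^2/2}$. A union bound over the $N$ vertices gives simultaneous accuracy with probability at least $1-\alpha$ once $M\ge 2\gap^{-2}\log(2N/\alpha)=O(\log(N/\alpha)\gap^{-2})$. All nodes are read from the same shots, so no factor of $N$ enters.

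On this event, every cross-component estimated difference satisfies $|\hat p_u-\hat p_v|>0$, so no vertex of $\mathcal G_1$ can be matched with one of $\mathcal G_2$. The test declares non-isomorphism, certifying that the occupation profiles are separated. I expect the main obstacle to be the measure-theoretic step: the product measure is singular with respect to the Haar measure on the joint sector, which forces the direct polynomial and polarization argument rather than a direct citation of Theorem~\ref{thm:main}.
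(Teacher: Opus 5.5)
Your positivity argument for $\gap(t_0)$ is correct, and more careful than the paper's. The paper applies Theorem~\ref{thm:main} for Haar-almost every $|\psi_0\rangle\in\mathbb S(\mathcal H_0)$ and takes a finite union of null zero sets. It never reconciles this with the product measure $\mu^{(1)}_{\mathrm{Haar}}\otimes\mu^{(2)}_{\mathrm{Haar}}$ in the statement. As you point out, for non-isomorphic graphs $\mathcal H_0(\mathcal G_{1,2})=\mathcal H_0(\mathcal G_1)\otimes\mathcal H_0(\mathcal G_2)$, both factors have dimension at least two, and product vectors form a Haar-null set, so the citation does not transfer.

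Your fix works: for fixed $t$ the signal is a real polynomial on a connected product of spheres, so one witness pair suffices. The witness can be given directly, so the polarization detour is unnecessary: at $t=0$ the pair $\bigl(|1\rangle^{\otimes|\mathcal V_1|},|0\rangle^{\otimes|\mathcal V_2|}\bigr)$ gives $f_{u,v}(0)=1$.

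Be aware that this argument never uses that $\{u\}$ and $\{v\}$ lie in distinct $\Gamma_{1,2}$-orbits. The appeal to Corollary~\ref{cor:corolary_main} is idle, and positivity of the gap under independent product states holds even when $\mathcal G_1\simeq\mathcal G_2$. If you want the orbit structure to carry weight, the paper's own mechanism can be repaired instead: polarize $z^\dagger\widehat\Delta_{\omega_\star}z$ separately in each tensor factor to show it does not vanish on product vectors.

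The gap is in the shot-complexity step. With accuracy $\epsilon=\gap(t_0)/2$, the good event only yields $|\hat p_u-\hat p_v|>\gap(t_0)-2\epsilon=0$, and a nonzero difference of empirical frequencies certifies nothing. For a cross-component pair the outcomes are independent under a product state. So even if $p_u=p_v\in(0,1)$, the two frequencies differ with probability tending to one as $M$ grows. Your rule ``declare separation when all cross-component estimates differ'' therefore has no control of false positives. The conclusion that no vertex of $\mathcal G_1$ can be matched with one of $\mathcal G_2$ is not something the data establishes at confidence $1-\alpha$.

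A margin is needed, as in the paper's construction:
\begin{itemize}
\item Take $\epsilon=\sqrt{\log(2N/\alpha)/(2M)}$, which depends only on $M$ and $\alpha$, and form the intervals $J_w=[\bar n_w(t_0)-\epsilon,\bar n_w(t_0)+\epsilon]\cap[0,1]$.
\item With probability at least $1-\alpha$, every $J_w$ contains $p_w(t_0)$ simultaneously.
\item Disjointness of $J_u$ and $J_v$ then certifies $p_u\neq p_v$.
\item Disjointness holds for all cross pairs once $\epsilon<\gap(t_0)/4$, i.e.\ $M>8\,\gap(t_0)^{-2}\log(2N/\alpha)$.
\end{itemize}
Equivalently, keep your estimators but declare $p_u\neq p_v$ only when $|\hat p_u-\hat p_v|>2\epsilon$. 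This changes your constant from $2$ to $8$, so the $O\bigl(\log(N/\alpha)\gap(t_0)^{-2}\bigr)$ scaling survives. As written, however, the final certification does not follow.
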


Here, $\mu_{\mathrm{Haar}}^{(1)}$ and $\mu_{\mathrm{Haar}}^{(2)}$ denote the normalized Haar measures on the unit spheres of $\mathcal H_0(\mathcal G_1)$ and $\mathcal H_0(\mathcal G_2)$, respectively.
To justify the positivity of $\gap(t_0)$, note that for every cross-component pair $u\in\mathcal V_1$, $v\in\mathcal V_2$, Corollary~\ref{cor:corolary_main} implies that $\{u\}$ and $\{v\}$ belong to distinct $\Gamma_{1,2}$-orbits whenever $\mathcal G_1\not\simeq\mathcal G_2$. Theorem~\ref{thm:main} therefore implies that
$
t\mapsto
	\langle \widehat n_u(t)\rangle
	-
	\langle \widehat n_v(t)\rangle
$
is not identically zero for Haar-almost every initial state $|\psi_0\rangle \in \mathbb{S}(\mathcal{H}_0)$. Since there are only finitely many cross-component pairs, the union of their zero sets still has Lebesgue measure zero. Hence $\gap(t_0)>0$ for almost every $t_0>0$. The shot-complexity bound then follows from the same Hoeffding/union-bound argument detailed in Appendix~\ref{app:appendix_stat_test}.
\subsection{Empirical validation}
\label{subsec:empirical_validation}

Note that Theorem~\ref{thm:main} is a \emph{generic} distinguishability statement over initial states in the trivial symmetry sector: it guarantees separation for $\mu_{\mathrm{Haar}}$-almost every $|\psi_0\rangle\in\mathbb S(\mathcal H_0)$. In practice, however, preparing or explicitly sampling generic states in $\mathcal H_0$ is graph-dependent, since $\mathcal H_0$ is determined by the automorphism group $\Gamma$. By contrast, only a few simple choices are guaranteed to belong to the trivial symmetry sector for every graph, such as fully permutation-invariant product or superposition states, including $|0\rangle^{\otimes N}$ and $|1\rangle^{\otimes N}$.

Our pipeline uses the fixed deterministic initialization $|\psi_0\rangle=|0\rangle^{\otimes N}$. For the dynamics resulting from the system described by Eq.~\eqref{eq:ham}, this is the simplest computational-basis product state and is independent of the graph instance. This choice is also consistent with the physical motivation of Eq.~\eqref{eq:ham}, where $|0\rangle^{\otimes N}$ corresponds to the native product state with no excitations. Theorem~\ref{thm:main} does not by itself guarantee that this particular initialization is separating for arbitrary Hamiltonian parameters. The experiments below therefore complement the generic theory by testing whether, in the driven regime used in our implementation, the same orbit-induced separation effect is already visible from the specific initialization $|0\rangle^{\otimes N}$. We refer the reader to the end of Appendix~\ref{app:proof_main} for further discussion.
\subsubsection{Hard graphs separation.}
To showcase the expressive power of the Ising dynamics, we consider $\mathcal{G}_1$ the \((4,4)\) Rook’s graph~\cite{Hoffman64Rook}, and $\mathcal{G}_2$ the Shrikhande graph~\cite{Shrikhande}, a non-isomorphic strongly regular graph (srg) pair~\cite{bose1963strongly} with identical parameters \((16,6,2,2)\) that requires at least the 4-WL test for a deterministic distinction~\cite{ARVIND202042, kasture2025multiparticlequantumwalksdistinguishing}. Both graphs are moreover \emph{vertex-transitive}: their automorphism groups act transitively on the vertex set, such that all vertices form a single orbit. Combined with the automorphism-invariant initialization \(\ket{\psi_0}=\ket{0}^{\otimes N}\in\mathcal{H}_0\) (\textit{cf.} section~\ref{subsec:graph_automorphisms_wavefunction_symmetries}), this implies that all vertices of a given graph share the same one-body occupation trajectory \(\langle \widehat n_u(t)\rangle\). Comparing $\mathcal{G}_1$ and $\mathcal{G}_2$ therefore reduces from the \(16\times16\) cross-vertex comparisons \((u_1,u_2)\), with \(u_1\in\mathcal V_1\) and \(u_2\in\mathcal V_2\), to a single comparison between one representative vertex per graph.
\newline Figure~\ref{fig:rook_vs_shrikhande} shows that although both graphs share strong regularity and symmetries, their one-body occupation trajectories \(\langle \widehat n_u(t)\rangle\) differ under the Hamiltonian evolution from Eq.~\eqref{eq:ham}. We therefore compute the minimum number of measurement shots required to certify this separation at a prescribed statistical confidence (here \(95\%\)), \textit{i.e.}, the measurement protocol yields a probabilistic distinction whose reliability increases with the shot budget. This behavior is consistent with recent empirical evidence~\cite{QuantPE} showing that observables derived from quantum random walks and Ising-type dynamics can resolve difficult families of non-isomorphic strongly regular graphs. As a classical baseline, the \(k\)-dimensional Weisfeiler--Leman procedure (\(k\)-WL) colors all ordered \(k\)-tuples in \(\mathcal V^k\) (thus \(N^k\) states) and, in a naïve refinement round, aggregates over all substitutions in each coordinate at cost \(O(kN^{k+1})\) operations (up to polylogarithmic overhead for hashing/sorting multiset signatures~\cite{Leman2018THERO}). Since each non-stable round strictly refines a partition of \(V^k\) with at most \(N^k\) classes, stabilization occurs after at most \(O(N^k)\) rounds in the worst case, yielding \(T_{k\text{-WL}}(N)=O(kN^{2k+1})\) (often written \(\widetilde O(N^{2k+1})\)). For \(k=4\) this gives \(O(N^9)\). In our setting \(N=16\), a single refinement round entails \(\approx 4\cdot 16^5=4.19\times10^6\) tuple updates, and the pessimistic worst-case total reaches \(16^9=6.87\times10^{10}\), whereas practical instances typically stabilize in far fewer than \(N^4\) rounds.
\begin{figure*}[htbp]
\centering
\includegraphics[width=\linewidth]{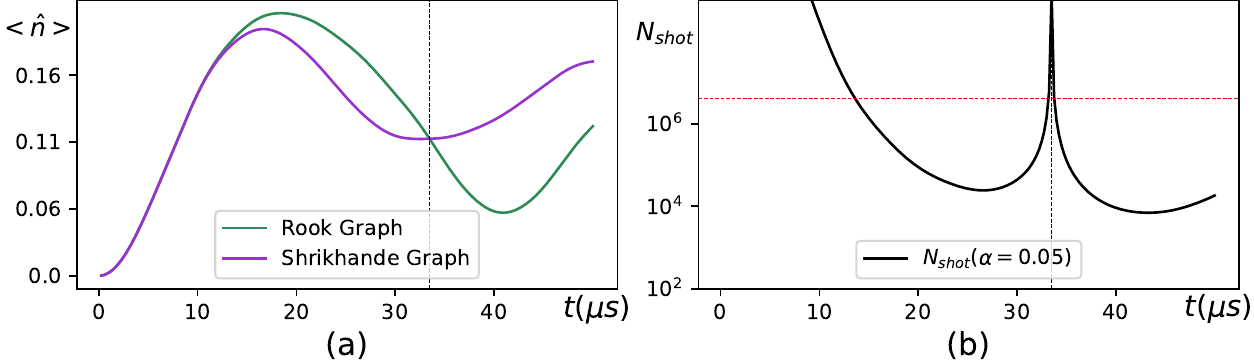}
\caption{
Rook versus Shrikhande graphs distinction. (a) One-body occupation dynamics \(\langle \hat n(t)\rangle\). Within each graph, the vertex-transitivity property ensures that all vertices share the same trajectory due to a single vertex orbit. (b) Number of shots required to distinguish the two graphs with success probability at least \(95\%\). The red dashed line indicates the lower bound for the time complexity scale of the exact classical counterpart (4-WL). Parameters: \(\ket{\psi_0}=\ket{0}^{\otimes N}\) for both graphs, \(\delta=0\) and \(\Omega=0.1\).
}
\label{fig:rook_vs_shrikhande}
\end{figure*}

\subsubsection{Study of larger instances.}
We now assess the separability of larger strongly regular graph families, namely $\operatorname{srg}(26,10,3,4)$ and $\operatorname{srg}(25,12,5,6)$, which contain respectively $10$ and $15$ non-isomorphic graphs. For each graph, the dynamics are initialized in the common all-zero state $\ket{0}^{\otimes N}$ and evolved under the graph-indexed Ising Hamiltonian with the same constant parameters as in the previous example: $\Omega=0.1$ and $\delta=0$.
Since graphs in these families are not vertex transitive, the comparison between two graphs cannot be reduced to a single representative node and must instead be performed at the level of the full node-resolved signature. We therefore compare the complete node-occupation trajectories modulo vertex relabeling.

All graphs within a given family share the same number of vertices $N$ (here $N=26$ and $N=25$ respectively), and their node occupations are sampled on a common discrete time grid $t_1<t_2<\dots<t_T$, where $T$ is the total number of timesteps. For a pair $(\mathcal G_i,\mathcal G_j)$ and vertices $u,v$, we define the per-vertex time-averaged mismatch
\begin{equation}
m_{uv}
:=
\frac{1}{T}\sum_{l=1}^{T}
\left|
\langle \hat n_u(t_l)\rangle_{\mathcal G_i}
-
\langle \hat n_v(t_l)\rangle_{\mathcal G_j}
\right|^2,
\label{eq:cost_matrix}
\end{equation}
and the subsequent permutation-invariant separation index:
\begin{equation}
\eta^{(1)}_{ij}
=
\min_{\pi\in S_{N}}
\left(
\frac{1}{N}\sum_{u=1}^{N} m_{u,\pi(u)}
\right)^{1/2},
\label{eq:RMS}
\end{equation}

where $S_{N}$ is the symmetric group acting on the $N$ vertices, so that the minimization ranges over all one-to-one relabelings of the vertices of $\mathcal G_j$ onto those of $\mathcal G_i$ (\textit{cf.} Appendix~\ref{appx:expe_details_1} for further details). The prefactor $1/(N\,T)$ averages the squared occupation mismatch over the $N$ nodes and over the $T$ sampled timesteps, so that $\eta^{(1)}_{ij}$ is a root-mean-square (RMS) distance. By construction, it vanishes if and only if the two graphs share identical one-point occupation trajectories up to a relabeling of their vertices. A strictly positive value therefore certifies that the two graphs are distinguished by the one-point dynamics alone.

Figure~\ref{fig:srg_separability} reports, for each family, the pair-averaged separation $\bar\eta^{(1)}$ together with its one-standard-deviation spread, and the worst-case separation $\min_{i<j}\eta^{(1)}_{ij}$, shown as a function of the elapsed physical time $t_m$ as the horizon grows over the $T$ time steps. Starting from a common initial state, all curves vanish at early times, before the occupations have had time to differentiate. As the evolution proceeds, the worst-case separation becomes and remains strictly positive, well above the numerical tolerance, for both families. This shows that every pair of non-isomorphic graphs in $\operatorname{srg}(26,10,3,4)$ and $\operatorname{srg}(25,12,5,6)$ is separated by the one-point occupation dynamics alone, thereby extending the separability results to these larger instances.
\begin{figure*}[htbp]
    \centering
    \includegraphics[width=\linewidth]{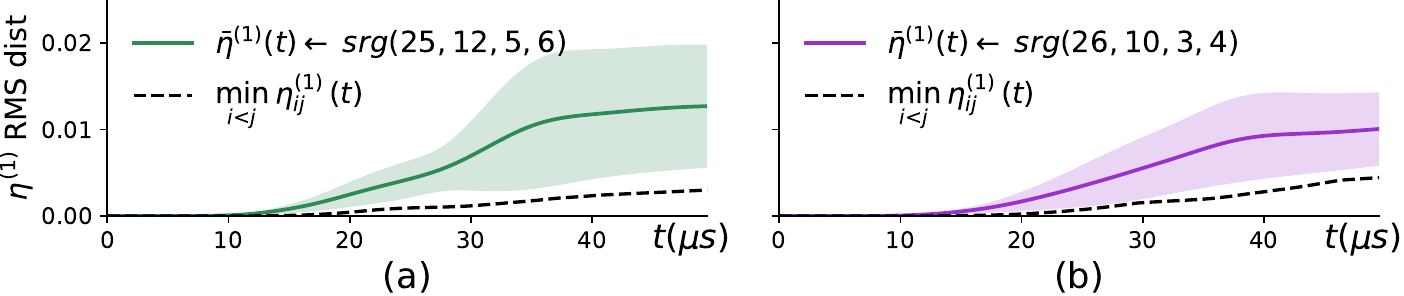}
\caption{Permutation-invariant separation index of strongly regular graph families $\operatorname{srg}(25,12,5,6)$ (a) and $\operatorname{srg}(26,10,3,4)$ (b), containing $15$ and $10$
non-isomorphic graphs respectively. The minimization over permutations $\pi\in S_N$ is solved exactly as a linear assignment problem between the node-occupation time traces of graph pairs, and $\eta^{(1)}_{ij}(t)$ is expressed as a root-mean-square (RMS) distance over nodes and time steps. All graphs are initialized in the common $\ket{0}^{\otimes N}$ state and evolved with constant parameters $\Omega=0.1$ and $\delta=0$. Solid lines show the pair-averaged
separation $\bar\eta^{(1)}(t)$ and shaded bands the corresponding
$\pm1$ standard deviation over graph pairs. Dashed black lines show the worst-case separation index $\min_{i<j}\eta^{(1)}_{ij}(t)$, where $\eta^{(1)}_{ij}(t)$ is the optimal permutation-invariant RMS distance
between the node-occupation trajectories of graphs $\mathcal G_i$ and
$\mathcal G_j$ up to time $t$. For both families, the worst-case
separation becomes and remains strictly positive, certifying that every pair of non-isomorphic graphs is distinguished by the one-point
occupation dynamics alone.}
\label{fig:srg_separability}
\end{figure*}

\section{Quantum Dynamics-based Attention Graph Transformer}
\label{sec:model}
The expressiveness of graph machine learning models is frequently benchmarked against the Weisfeiler-Leman (WL) hierarchy. Standard message-passing GNNs are bounded by the 1-WL test and often struggle with long-range dependencies~\cite{xu2018how,alon2021on}. While higher-order GNNs achieve greater expressiveness, they do so at the expense of steep polynomial scaling~\cite{morris2019wlgnn,maron2019provably}. Graph Transformers offer an alternative by capturing global interactions through self-attention, yet their expressiveness remains weak without carefully designed structural biases~\cite{dwivedi2020generalization,ying2021graphormer}. Consequently, advancing the state-of-the-art often requires injecting increasingly powerful (though computationally demanding) positional encodings.
The expressiveness mechanism considered in this work is not a fixed-level Weisfeiler--Leman refinement. For every fixed $k$, the $k$-WL test has families of non-isomorphic graphs that it cannot distinguish~\cite{cai1992optimal}. By contrast, the distinguishability criterion developed in section~\ref{subsec:problem_statement_theory} is stated at the level of graph-indexed Hamiltonian dynamics: for \emph{any graph}, distinct automorphism orbits induce, generically, distinct occupation-observable trajectories.

Applied to disjoint unions of connected graphs, this gives a generic separation mechanism for non-isomorphic graphs that is not constrained by a predetermined finite tuple dimension, as in fixed-$k$ WL. This comparison should nonetheless be read with care: the $k$-WL cost quoted above is that of an \emph{exact, deterministic} refinement, whereas our separation is \emph{probabilistic} and its practical cost is governed by the number of measurement shots, which scales with the inverse square of the relevant expectation-value gap. We therefore do not claim a uniform computational advantage over WL. Rather, the mechanism is not tied to a fixed tuple order, and the resulting trade-off between expressiveness and sampling cost is instance-dependent.

In this section, we introduce the Quantum Dynamics-based Attention Graph Transformer (\QDAGer). Rather than optimizing for scalability, \QDAGer is designed for small graphs and explicitly prioritizes structural expressiveness by injecting the dynamics directly into the attention mechanism. In this way, the model turns the generic separation mechanism of section~\ref{subsec:problem_statement_theory} into concrete node- and pair-level structural features, namely time-resolved occupations and connected two-point correlators. 
\newline Theorem~\ref{thm:expressiveness} (\textit{cf.} section~\ref{subsec:loss}) is the architecture-level formulation of this statement. Using the update rules defined below (\textit{cf.} section~\ref{subsec:model_archi}), the theorem claims that, for any finite dataset of graphs, and for Haar-generic initial states together with almost every choice of sampling times, there exists a \QDAGer parameter setting whose node- and pair-level embeddings separate all non-isomorphic graph pairs up to relabeling. Thus, Theorem~\ref{thm:expressiveness} translates the generic Hamiltonian separation mechanism into an expressiveness result for the actual model architecture. The practical caveat is that, in a concrete implementation, the observed separation depends on the chosen Hamiltonian parameters, initialization, sampling times, and measurement precision, which determine the trade-off between expressiveness, simulation cost, and statistical robustness.
\subsection{Model architecture}
\label{subsec:model_archi}
Our architecture is directly inspired by the GraphGPS design~\cite{rampasek2022GPS}, and we incorporate additional inductive biases for graph transformers following the adaptations proposed in~\cite{ma2023GraphInductiveBiases}. In particular, we retain the general GPS-style blueprint while tailoring its components to the transformer-based setting considered in~\cite{ma2023GraphInductiveBiases}.
\newline To compute the quantum-dynamical features, we emulate the Hamiltonian of Eq.~\eqref{eq:ham} using the open-source library from \footnote{See \url{https://github.com/pasqal-io/emulators} for implementation details.}~\cite{bidzhiev2025efficientemulationneutralatom}. While tensor-network approaches, including efficient architectures such as matrix product states, could in principle provide accurate and scalable approximations for suitable graph structures and entanglement regimes, we choose to compute the time evolution via state-vector (SV) emulation. This avoids truncation or bond-dimension effects and provides precise reference dynamics for benchmarking against state-of-the-art methods. This choice comes at a significant memory and runtime cost, limiting our simulations in practice to graphs with up to \(\sim 25\) nodes and motivating the use of small-graph datasets. Since the benchmark datasets considered here are not restricted to graphs admitting a direct neutral-atom hardware embedding, we rely on emulation of Eq.~\eqref{eq:ham} to ensure a controlled and comprehensive comparison across all datasets, as presented in section~\ref{sec:experiment}.
\newline A natural way to remain within these computational constraints while still training expressive transformer models is to consider tasks defined on \emph{pairs} of graphs. Indeed, emulating \(n\) individual graphs enables the construction of \(n(n-1)/2\) distinct unordered pairs, effectively increasing the number of supervised instances by a quadratic factor. This pairing strategy allows us to amortize the cost of expensive SV simulations and obtain datasets large enough to support higher-capacity models. Accordingly, we train our models on the datasets introduced in~\cite{JainNeuralGED}, where the limit for maximum graph size is $20$ nodes. 
Specifically, graphs with fewer than 20 nodes have their node and node-pair features padded with zeros up to the common maximum size. Finally, each dataset contains at most 1000 graphs. 
\newline This setup yields on the order of \(10^{5}\) graph pairs, which is sufficient for training small transformer architectures. The associated prediction task is graph edit distance (GED), a widely used measure of graph similarity based on the minimum-cost sequence of edit operations transforming one graph into another. Notably, exact GED computation is NP-hard~\cite{GEDComplexity}, making GED prediction a challenging benchmark for evaluating graph representation learning.
We now describe the model architecture used for this task, with Figure~\ref{fig:main} displaying the full pipeline described in the remainder of this section. 
\begin{figure*}[t]
	\centering
	\includegraphics[width=\textwidth]{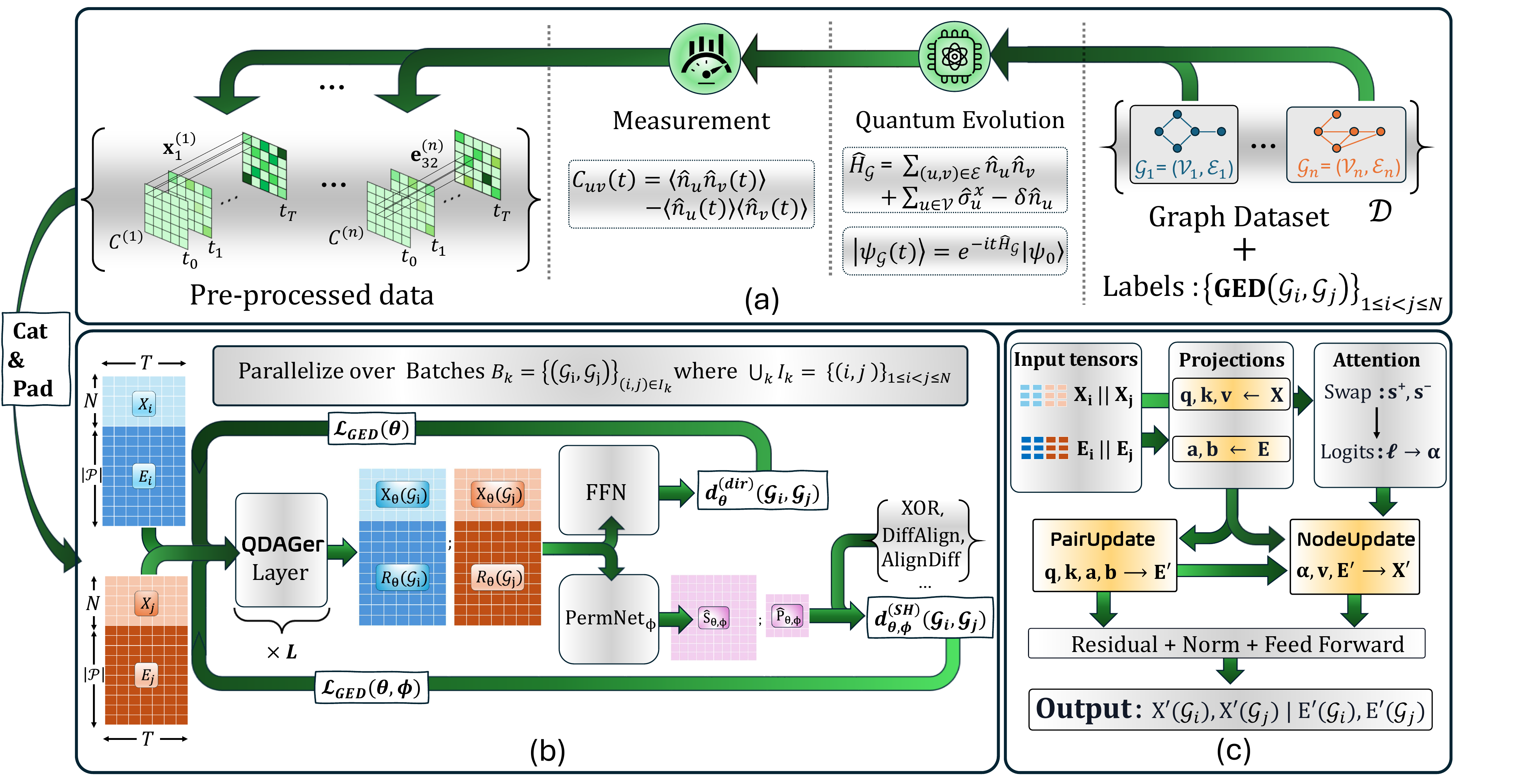}
\caption{
Detailed pipeline of the \QDAGer model architecture.
\textbf{(a)} Pre-processing phase: each graph $\mathcal{G}_i=(\mathcal{V}_i,\mathcal{E}_i)$ is mapped to the Hamiltonian of Eq.~\eqref{eq:ham}.
Starting from a common initial state $\ket{\psi_0}$, we evolve the system as
$\ket{\psi_{\mathcal{G}}(t)}=\exp(-it\hat{H}_{\mathcal{G}})\ket{\psi_0}$
and evaluate the relevant observables at multiple times, namely the node occupations
$\langle \hat{n}_u\rangle(t)$ and connected two-point correlators
$C_{uv}(t)=\langle \hat{n}_u\hat{n}_v\rangle(t)-\langle \hat{n}_u\rangle(t)\langle \hat{n}_v\rangle(t)$.
\textbf{(b)} For each graph pair $(\mathcal{G}_i,\mathcal{G}_j)$, the corresponding tensors are concatenated, padded, and processed by $\times L$  \QDAGer layers to produce node embeddings $X_\theta$ and pair embeddings $R_\theta$. These embeddings are used either by a direct distance head, yielding $d^{(\mathrm{dir})}_\theta(\mathcal{G}_i,\mathcal{G}_j)$, or by an alignment-based Sinkhorn branch in which \texttt{PermNet}$_\phi$ predicts soft node and pair alignments, yielding $d^{(\mathrm{SH})}_{\theta,\phi}(\mathcal{G}_i,\mathcal{G}_j)$. Both objectives aim to predict the corresponding ground-truth GED labels.
\textbf{(c)} One \QDAGer layer. Swap-consistent symmetric and antisymmetric features $(\mathbf{s}^{+},\mathbf{s}^{-})$ produce attention weights $\mathbf{\alpha}$, while linear projections generate node quantities $(\mathbf{q},\mathbf{k},\mathbf{v})$ and pair quantities $(\mathbf{a},\mathbf{b})$. Pair embeddings are updated through gated node-pair interactions, and node embeddings are updated using attention-weighted values together with the updated pair information, followed by residual, normalization, and feed-forward operations.
}
	\label{fig:main}
\end{figure*}

\subsubsection{Quantum dynamical features.}
For each graph $\mathcal{G}=(\mathcal{V},\mathcal{E})$ with $|\mathcal{V}|=N$ we emulate the state-vector dynamics and record
two-point correlators $C^{\mathcal{G}}(t)\in\mathbb{R}^{N\times N}$ at sampling (positive) times $t_1,\dots,t_T$. We associate to every node and every unordered pair the corresponding time-evolving observable : 
$\mathbf{x}_i\in\mathbb{R}^{T}:\ [\mathbf{x}_i]_t = \braket{\hat{n}_i(t)}_{\mathcal{G}}$ and $
\mathbf{e}_{ij}\in\mathbb{R}^{T}:\ [\mathbf{e}_{ij}]_t = \braket{\hat{n}_i\hat{n}_j(t) }_{\mathcal{G}}-  \braket{\hat{n}_i(t)}_{\mathcal{G}}\braket{\hat{n}_j(t)}_{\mathcal{G}}
\ (i<j).$
For a training pair $(\mathcal{G}_1,\mathcal{G}_2)$ we write $(\mathbf{x}^{(g)}_i,\mathbf{e}^{(g)}_{ij})$ for the features of $\mathcal{G}_g$, with $g\in\{1,2\}$.
We use $\langle \mathbf{u},\mathbf{v}\rangle := \mathbf{u}^\top\mathbf{v}=\sum_{t=1}^{T}u_t v_t$.

\subsubsection{Single-head quantum attention over nodes and pairs of nodes.}
Given $(\mathcal{G}_1,\mathcal{G}_2)$, for each node $i$ and pair $(i,j)$ we form
$
\mathbf{s}^{\pm}_i := \mathbf{x}^{(1)}_i \pm \mathbf{x}^{(2)}_i$ and 
$\mathbf{s}^{\pm}_{ij} := \mathbf{e}^{(1)}_{ij} \pm \mathbf{e}^{(2)}_{ij}.
$
With learnable $\mathbf{w}^{+},\mathbf{w}^{-}\in\mathbb{R}^{T}$ and biases $b \in\mathbb{R}$, define :
\begin{align}
\ell^{(1)}_{p}&=\langle \mathbf{s}^{+}_{p},\mathbf{w}^{+}\rangle+\langle \mathbf{s}^{-}_{p},\mathbf{w}^{-}\rangle+b, \\
\ell^{(2)}_{p}&=\langle \mathbf{s}^{+}_{p},\mathbf{w}^{+}\rangle-\langle \mathbf{s}^{-}_{p},\mathbf{w}^{-}\rangle+b,
\end{align}
for $p\in \mathcal{V}\cup\{(i,j):i<j\}$. 
Attention weights are then given by a softmax over the \emph{union} of node and pair indices:
\begin{multline}
\alpha^{(g)}_{p}= \text{softmax}(\ell^{(g)}_p)\;, \; p\in \mathcal{V}\cup\{(i,j):i<j\}, \\ g\in\{1,2\}.
\end{multline}
Here $\mathbf{s}^{+}_{p},\mathbf{s}^{-}_{p}$ are the symmetric/antisymmetric
combinations entering the logits. Under the swap $\mathcal{G}_1 \leftrightarrow \mathcal{G}_2$
the antisymmetric part changes sign, so the two channels are exchanged rather than left
invariant, $\ell^{(1)}_{p}\leftrightarrow\ell^{(2)}_{p}$ (and likewise
$\alpha^{(1)}_{p}\leftrightarrow\alpha^{(2)}_{p}$). The construction is thus
swap-\emph{equivariant}, and yields a swap-invariant output once the channels are recombined
symmetrically (\textit{cf.} section~\ref{subsec:loss}). This property is useful when the target GED is symmetric.
\subsubsection{Layer update.}
The update mechanism is carried out using two distinct mechanisms that we note \texttt{PairUpdate} and \texttt{NodeUpdate}
We map features to a $D$-dimensional latent space (all $W_\bullet\in\mathbb{R}^{D\times T}$):
\begin{align}
\mathbf{q}^{(g)}_i&=W_Q\mathbf{x}^{(g)}_i, \quad
\mathbf{k}^{(g)}_i=W_K\mathbf{x}^{(g)}_i, \quad
\mathbf{v}^{(g)}_i=W_V\mathbf{x}^{(g)}_i, \\
\mathbf{a}^{(g)}_{ij}&=W_{Ew}\mathbf{e}^{(g)}_{ij}, \quad
\mathbf{b}^{(g)}_{ij}=W_{Eb}\mathbf{e}^{(g)}_{ij}.
\end{align}
\texttt{PairUpdate}:
Pairs are updated by the following gated interaction:
\begin{multline}
\mathbf{e}^{\prime (g)}_{ij}
:=
\sigma\!\Big(\rho(\mathbf{q}^{(g)}_i+\mathbf{k}^{(g)}_j)\odot \mathbf{a}^{(g)}_{ij}+\mathbf{b}^{(g)}_{ij}\Big)
\in\mathbb{R}^{D} \\ \text{and we put : }
\rho(\mathbf{u})=\operatorname{sign}(\mathbf{u})\odot \log({1+|\mathbf{u}|})
\end{multline}
(all elementwise) to stabilize the training. Optionally, residuals may be used:
$\mathbf{e}^{\prime (g)}_{ij}\leftarrow \mathbf{e}^{\prime (g)}_{ij}+\mathbf{a}^{(g)}_{ij}$.
\newline
\texttt{NodeUpdate}:
Nodes mix an attention-weighted value term and an attention-weighted aggregation of updated pair features:
\begin{equation}
\label{eq:node_update}
\mathbf{x}^{\prime (g)}_{i}
:=
\alpha^{(g)}_{i}\,\mathbf{v}^{(g)}_{i}
+\sum_{j\neq i} \alpha^{(g)}_{ij}\,W_{NE}\mathbf{e}^{\prime (g)}_{i,j}
\in\mathbb{R}^{D}.
\end{equation}
\newline We then apply a standard feed-forward network (FFN) block with residual connection and normalization to $\{\mathbf{x}^{\prime (g)}_i\}$ and $\{\mathbf{e}^{\prime (g)}_{ij}\}$, as done in~\cite{ma2023GraphInductiveBiases}. Note that it is possible and straightforward to extend this model with multi-headed attention, each producing per-head updates ${\mathbf{x}_i^{\prime (g,h)}}$ and ${\mathbf{e}_{ij}^{\prime (g,h)}}$ that we later combine through linear layers via ${\mathbf{x}_i^{\prime (g)}} = \sum_{h=1}^H W_{X}.{\mathbf{x}_i^{\prime (g,h)}} + \mathbf{b}_{X}$ and ${\mathbf{e}_{ij}^{\prime (g)}} = \sum_{h=1}^H W_E.{\mathbf{e}_{ij}^{\prime (g,h)}} + \mathbf{b}_E$ where $W_{X/E}$, $\mathbf{b}_{X/E}$ are learned parameters. 

\subsection{Loss computation: direct embeddings vs.\ neural set divergence}
\label{subsec:loss}

Let a training sample be a pair of graphs $(\mathcal{G}_k,\mathcal{G}_l)$ with ground-truth edit distance $d(\mathcal{G}_k,\mathcal{G}_l)$.
\QDAGer layers (denoted by parameters $\theta$) produce node and pair embeddings
\begin{multline}
(X_\theta(\mathcal{G}_k),R_\theta(\mathcal{G}_k))\in\mathbb{R}^{N\times D}\times\mathbb{R}^{|\mathcal P|\times D} \;\; ; \;\;
\\
(X_\theta(\mathcal{G}_l),R_\theta(\mathcal{G}_l))\in\mathbb{R}^{N\times D}\times\mathbb{R}^{|\mathcal P|\times D}.
\end{multline}
With $N$ the number of nodes (including padding) and $\mathcal{P}$ the set of unordered node pairs with $|\mathcal{P}| = \frac{N(N-1)}{2}$
\subsubsection{Blindfold loss.}
We define a model distance by applying a permutation-invariant read-out $\Phi$ to the two embedding pairs
\begin{equation}
\label{eq:blindfold_loss}
d_\theta(\mathcal{G}_k,\mathcal{G}_l)
:=\Phi\!\Big(X_\theta(\mathcal{G}_k),R_\theta(\mathcal{G}_k),X_\theta(\mathcal{G}_l),R_\theta(\mathcal{G}_l)\Big),
\end{equation}
where $\Phi$ is any map invariant to permutations of the nodes and pairs of each graph. 
In our implementation, $\Phi$ consists of three steps: a pooling that reduces each embedding
set to a fixed-size descriptor, a symmetric/antisymmetric combination of the two graphs' descriptors, and a final FFN that outputs the scalar distance. We detail each below.

The pooling maps a graph's stacked embeddings $Z=(z_1,\dots,z_N)\in\mathbb{R}^{N\times D}$ (zero-padded to a common size $N$, with validity mask $\mathbf{M}\in\{0,1\}^{N}$ and valid-token count
$c=\sum_i M_i$) to a descriptor in $\mathbb{R}^{D}$,
\begin{equation}
\operatorname{pool}(Z)=\tfrac{1}{c}\textstyle\sum_{i} M_i\,z_i .
\end{equation}
The mean is thus mask-aware: it averages only over the $c$ unpadded tokens, so it is a genuine
average over the graph's real nodes (resp.\ pairs), independent of the padding length $N$.
Applying it to node and pair embeddings yields the descriptors
$\bar{X}_g=\operatorname{pool}(X_\theta(\mathcal{G}_g))$ and
$\bar{R}_g=\operatorname{pool}(R_\theta(\mathcal{G}_g))$, with $\bar{X}_g,\bar{R}_g\in\mathbb{R}^{D}$
and $g\in\{k,l\}$.

The two graphs' descriptors are then combined through symmetric ($+$) and antisymmetric ($-$) terms,
\begin{multline}
\label{eq:concat_rep}
\mathbf{h}_{kl}:=\operatorname{concat}\big(\bar{X}_k+\bar{X}_l,\ \bar{X}_k-\bar{X}_l,\
\\
\bar{R}_k+\bar{R}_l,\ \bar{R}_k-\bar{R}_l\big)\in\mathbb{R}^{4D},
\end{multline}
and a final FFN maps this to a scalar,
\begin{equation}
d_\theta(\mathcal{G}_k,\mathcal{G}_l):=\operatorname{FFN}_\theta(\mathbf{h}_{kl}),
\end{equation}
so that $\Phi=\operatorname{FFN}_\theta\circ\,\mathbf{h}\circ\operatorname{pool}$. Since
$\operatorname{pool}$ is permutation-invariant, so is $d_\theta$. The corresponding regression objective is
\begin{equation}
\mathcal{L}_{\mathrm{emb}}(\theta):=\mathbb{E}_{(\mathcal{G}_k,\mathcal{G}_l)}
\Big(d_\theta(\mathcal{G}_k,\mathcal{G}_l)-d(\mathcal{G}_k,\mathcal{G}_l)\Big)^2,
\end{equation}
with the empirical version obtained by averaging over mini-batches of graph pairs.
\subsubsection{Neural set divergence : alignment \& surrogate loss.}
In~\cite{JainNeuralGED}, authors introduce a method that learns the alignment between $\mathcal{G}_k$ and $\mathcal{G}_l$ through a doubly-stochastic differentiable permutation matrix produced by a neural network PermNet, parameterized by $\phi$, using the Sinkhorn (SH)~\cite{Mena2018LearningLP} module. They consequently introduce the matrices $\widehat P_{\theta,\phi}(\mathcal{G}_k,\mathcal{G}_l)\in\mathbb{R}^{N\times N}$ and $\widehat S_{\theta,\phi}(\mathcal{G}_k,\mathcal{G}_l)\in\mathbb{R}^{|\mathcal P|\times|\mathcal P|},$ where $\widehat P_{\theta,\phi}$ aligns nodes and $\widehat S_{\theta,\phi}$ (constructed either from $\widehat P_{\theta,\phi}$, from pair embeddings, or jointly) aligns pairs). Using these transport plans, they define an aligned surrogate distance
\begin{align}
\label{eq:surrogate_dist}
d_{\theta,\phi}(\mathcal{G}_k,\mathcal{G}_l)
:=\ &\mathfrak{S}\!\Big(
X_\theta(\mathcal{G}_k),\,R_\theta(\mathcal{G}_k), \notag\\
&\phantom{\mathfrak{S}\!\Big(}
\widehat P_{\theta,\phi}(\mathcal{G}_k,\mathcal{G}_l)\,X_\theta(\mathcal{G}_l), \notag\\
&\phantom{\mathfrak{S}\!\Big(}
\widehat S_{\theta,\phi}(\mathcal{G}_k,\mathcal{G}_l)\,R_\theta(\mathcal{G}_l)
\Big),
\end{align}
where $\mathfrak{S}$ is one of the surrogates (\texttt{AlignDiff} / \texttt{DiffAlign} / \texttt{XOR-DiffAlign}) introduced in section 4.2 of~\cite{JainNeuralGED}. The training objective is
$\mathcal L_{\mathrm{GED}}(\theta,\phi) :=\mathbb{E}_{(\mathcal{G}_k,\mathcal{G}_l)}\Big(d_{\theta,\phi}(\mathcal{G}_k,\mathcal{G}_l)-d(\mathcal{G}_k,\mathcal{G}_l)\Big)^2$, again with empirical loss computed by averaging over mini-batches of graph pairs. In practice, gradients flow through both the embedder (\QDAGer, via $X_\theta(\cdot),R_\theta(\cdot)$) and the alignment module (PermNet, via $\widehat P_{\theta,\phi},\widehat S_{\theta,\phi}$), enabling end-to-end training.

\subsubsection{Swap invariance.}
Having established the invariance of the \QDAGer layer under the swap of its two arguments, \(\mathcal{G}_k \leftrightarrow \mathcal{G}_l\), a remaining subtlety is whether this property persists at the level of the end-to-end model when these layers are considered jointly with the loss-specific layers described in the above paragraphs. This is desirable when the target distance itself has the property, \textit{i.e.} the ground truth satisfies $d(\mathcal{G}_k,\mathcal{G}_l)=d(\mathcal{G}_l,\mathcal{G}_k)$, which happens when the edit costs are symmetric, \textit{i.e.} equal node insertion/deletion costs, and equal edge insertion/deletion costs. With asymmetric (directed) edit costs the target is order-dependent, and swap invariance of either model is neither expected nor desirable. Since we do evaluate our models on a dataset where the edit costs are symmetric (\textit{cf.} section~\ref{sec:experiment}), it is worth examining whether each of the two variants of our model actually verify this property, and how it can be enforced when it does not.
\newline For the blindfold read-out of Eq.~\eqref{eq:blindfold_loss}, $d_\theta$ is invariant to node (and induced-pair) permutations, but it is \emph{not} invariant under the swap $\mathcal{G}_k\leftrightarrow\mathcal{G}_l$, and this holds irrespective of the edit costs: the antisymmetric terms $\bar{X}_k-\bar{X}_l$ and $\bar{R}_k-\bar{R}_l$ change sign under the exchange and are then mixed by a generic FFN. One could enforce swap invariance by replacing these terms by their respective absolute values in Eq.~\eqref{eq:concat_rep}, but we empirically observed that this compromises the performance of the corresponding model.
\newline For the surrogate distance of Eq.~\eqref{eq:surrogate_dist}, the situation is more intricate. Under the symmetric-cost assumption above, $d_{\theta,\phi}$ inherits the identity $d_{\theta,\phi}(\mathcal{G}_k,\mathcal{G}_l)=d_{\theta,\phi}(\mathcal{G}_l,\mathcal{G}_k)$ provided that: (i) a two-sided surrogate is used (this is verified by the \texttt{XOR-DiffAlign} function, which is our chosen surrogate in all the subsequent experiments), and (ii) the Sinkhorn iterations have converged to a genuinely doubly-stochastic plan, whereby $\widehat P_{\theta,\phi}(\mathcal{G}_l,\mathcal{G}_k)=\widehat P_{\theta,\phi}(\mathcal{G}_k,\mathcal{G}_l)^{\top}$ (and likewise for $\widehat S_{\theta,\phi}$). When any of these conditions is relaxed the estimator is only approximately symmetric. Exact symmetry can nonetheless be enforced \emph{a posteriori} through the symmetrized read-out $\tfrac{1}{2}\big(d_{\theta,\phi}(\mathcal{G}_k,\mathcal{G}_l)+d_{\theta,\phi}(\mathcal{G}_l,\mathcal{G}_k)\big)$.
\subsubsection{Model expressiveness.}
The losses above define how the embeddings produced by \QDAGer are used for
GED prediction. Independently of the particular regression head (FFN) or surrogate loss, one can ask whether the architecture has enough capacity to preserve the separating information contained in the dynamical features. The following result formalizes this expressiveness property under the generic separation assumptions of section~\ref{subsec:problem_statement_theory}.

Let \(\mathcal D\) be a finite dataset of graphs, and generate the dynamical input features of \QDAGer from initial states sampled according to the Haar measure on the corresponding trivial symmetry sectors.
\begin{theorem}[Expressiveness of \QDAGer]
\label{thm:expressiveness}
For Haar-almost every choice of initial states and for almost every choice of sampling times, there exists a parameter setting \(\theta_0\) such that, for every non-isomorphic pair \(\mathcal G_k\not\simeq \mathcal G_l\) in \(\mathcal D\), the resulting \QDAGer embeddings are distinct up to node relabeling. 
\end{theorem}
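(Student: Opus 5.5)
The plan is to reduce the statement to a separation property of the raw dynamical features, and then show that some parameter setting \(\theta_0\) carries that separation through the layers. I interpret ``distinct up to node relabeling'' as follows: there is no bijection \(\pi:\mathcal V_k\to\mathcal V_l\) such that the node embeddings satisfy \(X_{\theta_0}(\mathcal G_k)_i=X_{\theta_0}(\mathcal G_l)_{\pi(i)}\) for all \(i\) and the pair embeddings match under the induced map on pairs.

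\textbf{Step 1: feature-level separation.} Fix a non-isomorphic pair \(\mathcal G_k\not\simeq\mathcal G_l\) in \(\mathcal D\). I want, for almost every choice of initial states and sampling times \(t_1,\dots,t_T\), that \(\mathbf x^{(k)}_u\neq\mathbf x^{(l)}_v\) for every \(u\in\mathcal V_k\), \(v\in\mathcal V_l\). If the graphs are connected, Corollary~\ref{cor:corolary_main} shows that \(\{u\}\) and \(\{v\}\) lie in distinct \(\Gamma_{k,l}\)-orbits. Since the dynamics factorize for product initial states, Theorem~\ref{thm:main} together with the argument given after Proposition~\ref{thm:stat_test} shows that \(t\mapsto\langle\hat n_u(t)\rangle-\langle\hat n_v(t)\rangle\) is a nonzero trigonometric polynomial. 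It is analytic in \(t\), so its zero set is finite on bounded intervals. In particular, \(\{t_1: f(t_1)=0\}\) is Lebesgue-null for each of the finitely many triples \((k,l,u,v)\), and a finite union of null sets is null.

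Two points need care here. First, Theorem~\ref{thm:main} is stated for Haar measure on \(\mathcal H_0(\mathcal G_{k,l})\), not for the product measure. Since product states in \(\mathcal H_0(\mathcal G_k)\otimes\mathcal H_0(\mathcal G_l)\) are not Haar-typical in the joint sector, I would redo the polynomial argument directly. The function \(f\) is a real polynomial in the real coordinates of \((\psi^{(1)}_0,\psi^{(2)}_0)\). It therefore either vanishes identically or vanishes on a null set. To exclude identical vanishing, it suffices to exhibit a single pair of states for which \(\langle\hat n_u(t)\rangle_{\mathcal G_k}\) and \(\langle\hat n_v(t)\rangle_{\mathcal G_l}\) differ, for instance by taking the two states independently and using that the first depends only on \(\psi^{(1)}_0\). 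Second, if the graphs are disconnected, I would apply the same argument component-wise. I expect the transfer from Theorem~\ref{thm:main} to the product-measure setting to be the \textbf{main obstacle}. My first attempt would show that for fixed \(\psi^{(2)}_0\) the map \(\psi^{(1)}_0\mapsto f\) is non-constant, using Haar-genericity in \(\mathcal H_0(\mathcal G_k)\) alone.

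\textbf{Step 2: carrying separation through the layers.} I choose \(\theta_0\) so that the node channel is injective on the node features. With \(D\ge T\), set \(W_V=[I_T;0]\) and \(W_{NE}=0\), which kills the pair contribution in Eq.~\eqref{eq:node_update}. Set \(\mathbf w^{\pm}=0\) and \(b=0\), so all logits are equal and \(\alpha^{(g)}_i=1/|\mathcal V\cup\mathcal P|\) is a constant \(c>0\). Then \(\mathbf x'^{(g)}_i=c\,\mathbf x^{(g)}_i\) in its first \(T\) coordinates, which is injective. For the FFN and normalization blocks, I take residual connections with FFN weights set to zero. Normalization is the only issue: LayerNorm is injective only modulo affine shifts. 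To handle it, I would either choose the learnable affine parameters appropriately and append a constant coordinate so that LayerNorm is injective on the image, or set normalization to the identity if the architecture permits. The pair update is irrelevant to the node conclusion.

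\textbf{Step 3: conclusion.} Iterating over \(L\) layers, each node embedding of \(\mathcal G_k\) is an injective image of \(\mathbf x^{(k)}_u\). Suppose a relabeling \(\pi\) matched the node embeddings of \(\mathcal G_k\) and \(\mathcal G_l\). Then \(\mathbf x^{(k)}_u=\mathbf x^{(l)}_{\pi(u)}\) would hold for some pair \(u,\pi(u)\), contradicting Step 1. Since \(\mathcal D\) is finite, a single \(\theta_0\), which does not depend on the pair, works for all non-isomorphic pairs simultaneously. This holds outside a finite union of null sets of initial states and sampling times.
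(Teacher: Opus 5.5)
Your proof is correct and has the same skeleton as the paper's: generically separate the raw node features, then exhibit a $\theta_0$ that passes node features injectively through all layers. Both steps are executed differently, though.

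\textbf{Step~1.} The paper simply invokes Theorem~\ref{thm:main} and Corollary~\ref{cor:corolary_main}. That applies a statement about Haar measure on $\mathbb S(\mathcal H_0(\mathcal G_{k,l}))$ to independent product states, which form a Haar-null subset of that sphere. Your direct genericity argument under $\mu^{(1)}_{\mathrm{Haar}}\otimes\mu^{(2)}_{\mathrm{Haar}}$ repairs this. The step you left open is easy. Fix $\psi^{(2)}_0$ and evaluate at $t=0$: the all-zero state gives $\langle\widehat n_u\rangle=0$, while the orbit state $|\Psi_{\{u\}}\rangle$ gives $1/|\mathcal O_{\{u\}}|$. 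So $\psi^{(1)}_0\mapsto f(0)$ is non-constant. Hence $f(0)$ is a nonzero real polynomial on the connected product of spheres, so $f(0)\neq 0$, and therefore $f\not\equiv 0$, for almost every pair of states.

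Note, however, that this argument never uses $\mathcal G_k\not\simeq\mathcal G_l$. With independently drawn initial states it separates any two graphs, isomorphic ones included. So the appeal to the Corollary, the connectedness assumption and the component-wise plan are superfluous. (An orbit-based component-wise argument would in fact fail when the two graphs share an isomorphic component.) It also shows that, under this initialization model, the separation is driven by independence rather than by symmetry.

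\textbf{Step~2.} You route the features through the value path with uniform attention ($\mathbf w^{\pm}=0$, $b=0$). The paper instead sets $W_V=W_{NE}=0$ and lets the residual plus an exact ReLU identity FFN carry $\mathbf x_i$. Your choice does not need the attention residual, but it does need the softmax normaliser $c$ to coincide for both graphs. This is automatic if the softmax ranges over the padded index set. With masking, equal-size graphs share $c$, and unequal sizes are caught by the zero-padded nodes.

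Finally, if, as you say, LayerNorm is only injective modulo $z\mapsto az+b\mathbf 1$, then a single appended constant coordinate does not fix it: $(\mathbf x,1)$ and $(a\mathbf x+(1-a)\mathbf 1,1)$ are identified. Use two distinct constants, or simply take the normalisation inactive as the paper does.
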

The almost-sure separation follows from Theorem~\ref{thm:main}, together with the fact that \(\mathcal D\) contains only finitely many graph pairs and that a finite union of zero-measure exceptional sets still has measure zero. In other words, under the Haar-random initialization considered in section~\ref{subsec:problem_statement_theory}, \QDAGer contains a parameter setting that preserves the separating dynamical features on any finite dataset, for almost every initial state and almost every choice of sampling times. The proof is provided in Appendix~\ref{app:thm_expressiveness}.
\newline Note that, in the following section, the experiments are performed with the deterministic initialization \(|0\rangle^{\otimes N}\). The theorem should therefore be read as a generic expressiveness guarantee, while the empirical section verifies that this physically natural initialization also yields informative features and a strong inductive bias in practice.

\section{Experimental Results}
\label{sec:experiment}
We evaluate on seven real-world graph datasets\footnote{Full code available in \url{https://github.com/pasqal-io/QDAGer}}: Mutagenicity (\textsc{Mutag}), \textsc{Ogbg-Code2} (\textsc{Code2}),
\textsc{Ogbg-Molhiv} (\textsc{Molhiv}), \textsc{Ogbg-Molpcba} (\textsc{Molpcba}), \textsc{AIDS}, \textsc{Linux}, and \textsc{Yeast}~\cite{JainNeuralGED}, which draw on the OGB suite~\cite{hu2020open}, the Mutagenicity and IAM graph databases~\cite{kazius2005derivation,riesen2008iam}, and the AIDS/Linux graph-similarity benchmarks~\cite{bai2019simgnn}. For consistency, we use the same train/validation/test splits and the same graph-pair construction protocol as in~\cite{JainNeuralGED}. The performances are reported via the mean-squared error, following the approaches described in section~\ref{subsec:loss}. Ground-truth GED values are computed exactly using the F2 solver implemented in GEDLIB~\cite{GEDLIB1, GEDLIB2}. Results are reported under two cost regimes: a uniform setting with $b^\ominus=b^\oplus=a^\ominus=a^\oplus=1$,
and a non-uniform setting with $b^\ominus=3,\ b^\oplus=1,\ a^\ominus=2,\ a^\oplus=1$, where $b^\ominus, b^\oplus, a^\ominus, a^\oplus$ respectively correspond to node removal, node addition, edge removal and edge addition costs~\cite{JainNeuralGED}.

We don't benchmark the labeled datasets, as the model described here is purely structure-oriented, and is not designed for labeled graphs.  We further follow~\cite{JainNeuralGED} for all baseline comparisons, using the same set of competing methods, namely GMN~\cite{li2019graph}, SimGNN~\cite{bai2019simgnn}, GraphSim~\cite{bai2020learning}, ISONET~\cite{roy2022interpretable}, GREED~\cite{ranjan2022greed}, ERIC~\cite{zhuo2022efficient}, H2MN~\cite{zhang2021h2mn}, EGSC~\cite{qin2021slow}, and GraphEDX~\cite{JainNeuralGED}. In addition, we run an ablation in which we replace the quantum-dynamics-derived inputs of \QDAGer by standard classical structural features, namely relative random-walk (RWs)~\cite{gartner2003graph,vishwanathan2010graph} and heat kernel (HK)~\cite{kondor2002diffusion} features computed on the input graphs. All other components (architecture, alignment module, and loss) are kept unchanged. Note that, for these experiments, the quantum dynamics are obtained by emulating expectation values, up to the emulation error discussed in~\cite{bidzhiev2025efficientemulationneutralatom}. The goal here is to assess the expressiveness of these features in an idealized setting, before eventually considering finite-shot estimation and hardware implementation on neutral-atom platforms in future work. All results are reported in Table~\ref{tab:res_table}, and additional experimental details are provided in Table~\ref{tab:qdager_equal_core} of Appendix~\ref{appx:expe_details_3}.  
\begin{table*}[t]
\centering
\scriptsize
\setlength{\tabcolsep}{3.6pt}
\renewcommand{\arraystretch}{1.18}
\caption{Comparison of graph edit distance prediction performance on both \textbf{uniform-cost} and \textbf{non-uniform-cost} settings (lower MSE is better). We report state of the art results together with our methods, grouped into \textbf{blindfold} (dir) and \textbf{surrogate} (SH) according to their corresponding model/loss. Each MSE provided with its standard error obtained on the test set. The best, second-best, and third-best results in each dataset column are highlighted in \bestcell{blue}, \secondcell{green}, and \thirdcell{orange}, respectively.}
\resizebox{\textwidth}{!}{%
\begin{tabular}{llccccccc}
\toprule
Cost & Method & AIDS & Yeast & Mutag & MolHIV & MolPCBA & Code2 & Linux \\
\midrule

\multirow{18}{*}{\rotatebox[origin=c]{90}{\textbf{Uniform cost}}}
& \multicolumn{8}{l}{\textbf{Blindfold loss}} \\
\cmidrule{2-9}
& GMN-Match
  & $0.821\pm0.010$ & $1.175\pm0.013$ & $0.797\pm0.013$ & $1.318\pm0.020$ & $1.073\pm0.011$ & $1.677\pm0.187$ & $0.687\pm0.088$ \\
& GMN-Embed
  & $1.044\pm0.013$ & $1.767\pm0.021$ & $1.032\pm0.016$ & $1.859\pm0.020$ & $1.951\pm0.020$ & $1.358\pm0.104$ & $0.736\pm0.102$ \\
& ISONET
  & $1.640\pm0.020$ & $1.578\pm0.019$ & $1.187\pm0.021$ & $1.354\pm0.015$ & $1.106\pm0.011$ & \thirdcell{$0.879\pm0.061$} & $1.185\pm0.115$ \\
& GREED
  & $1.004\pm0.012$ & $1.423\pm0.015$ & $1.398\pm0.033$ & $1.708\pm0.019$ & $1.550\pm0.017$ & $1.869\pm0.140$ & $1.331\pm0.169$ \\
& ERIC
  & $0.731\pm0.008$ & $0.969\pm0.010$ & $0.719\pm0.011$ & $1.165\pm0.018$ & $0.862\pm0.009$ & $1.363\pm0.110$ & $1.664\pm0.260$ \\
& SimGNN
  & $1.455\pm0.020$ & $1.999\pm0.043$ & $1.471\pm0.024$ & $1.609\pm0.020$ & $1.456\pm0.020$ & $2.667\pm0.215$ & $7.232\pm0.762$ \\
& H2MN
  & $1.114\pm0.015$ & $1.353\pm0.018$ & $1.278\pm0.021$ & $1.521\pm0.020$ & $1.402\pm0.020$ & $7.240\pm0.527$ & $2.238\pm0.247$ \\
& GraphSim
  & $1.936\pm0.026$ & $2.232\pm0.030$ & $2.005\pm0.031$ & $2.577\pm0.064$ & $1.656\pm0.023$ & $3.139\pm0.206$ & $2.900\pm0.318$ \\
& EGSC
  & $0.627\pm0.007$ & $0.950\pm0.010$ & $0.765\pm0.011$ & $1.138\pm0.016$ & $0.938\pm0.010$ & $4.165\pm0.285$ & $2.411\pm0.325$ \\
& \QDAGer (dir)
  & $0.630\pm0.008$ & $0.817\pm0.008$ & $0.794\pm0.011$ & $0.839\pm0.010$ & $0.752\pm0.007$ & $1.019\pm0.077$ & $1.056\pm0.124$ \\
& \QDAGer (HK+dir)
  & $0.597\pm0.007$ & $0.896\pm0.011$ & $0.811\pm0.012$ & $0.983\pm0.012$ & $0.811\pm0.008$ & $1.099\pm0.092$ & $1.464\pm0.182$ \\
& \QDAGer (RWs+dir)
  & $0.668\pm0.008$ & $0.967\pm0.011$ & $0.724\pm0.011$ & $0.827\pm0.009$ & $0.777\pm0.008$ & $1.007\pm0.097$ & $1.587\pm0.237$ \\
\cmidrule{2-9}
& \multicolumn{8}{l}{\textbf{Surrogate loss}} \\
\cmidrule{2-9}
& GRAPHEDX
  & $0.565\pm0.006$ & \secondcell{$0.717\pm0.007$} & \secondcell{$0.492\pm0.007$} & \thirdcell{$0.781\pm0.008$} & $0.764\pm0.007$ & \bestcell{$0.429\pm0.036$} & \bestcell{$0.354\pm0.043$} \\
& \QDAGer (SH)
  & \bestcell{$0.472\pm0.005$} & \bestcell{$0.699\pm0.007$} & \bestcell{$0.453\pm0.007$} & \bestcell{$0.578\pm0.006$} & \bestcell{$0.628\pm0.006$} & $1.035\pm0.081$ & \thirdcell{$0.589\pm0.093$} \\
& \QDAGer (HK+SH)
  & \secondcell{$0.486\pm0.005$} & \thirdcell{$0.717\pm0.008$} & $0.769\pm0.011$ & \secondcell{$0.687\pm0.007$} & \thirdcell{$0.671\pm0.006$} & $1.676\pm0.102$ & $4.078\pm1.479$ \\
& \QDAGer (RWs+SH)
  & \thirdcell{$0.495\pm0.005$} & $1.535\pm0.109$ & \thirdcell{$0.572\pm0.008$} & $0.853\pm0.011$ & \secondcell{$0.666\pm0.006$} & \secondcell{$0.555\pm0.045$} & \secondcell{$0.514\pm0.061$} \\

\midrule

\multirow{18}{*}{\rotatebox[origin=c]{90}{\textbf{Non-uniform cost}}}
& \multicolumn{8}{l}{\textbf{Blindfold loss}} \\
\cmidrule{2-9}
& GMN-Match
  & $31.522\pm0.513$ & $63.179\pm1.127$ & $69.210\pm0.883$ & $76.923\pm0.862$ & $23.985\pm0.224$ & $13.472\pm0.970$ & $21.519\pm2.256$ \\
& GMN-Embed
  & $33.221\pm0.523$ & $60.949\pm0.663$ & $72.495\pm0.915$ & $78.254\pm0.865$ & $28.437\pm0.268$ & $13.425\pm1.035$ & $20.591\pm2.136$ \\
& ISONET
  & $5.513\pm0.092$ & $4.555\pm0.061$ & $3.369\pm0.062$ & $3.451\pm0.039$ & $2.781\pm0.029$ & $3.025\pm0.206$ & $3.031\pm0.299$ \\
& GREED
  & $34.354\pm0.557$ & $60.652\pm0.704$ & $68.732\pm0.867$ & $78.300\pm0.795$ & $26.057\pm0.238$ & $11.095\pm0.773$ & $20.667\pm2.140$ \\
& ERIC
  & $1.581\pm0.017$ & $2.341\pm0.030$ & $1.981\pm0.032$ & $3.377\pm0.070$ & $2.057\pm0.020$ & $12.767\pm1.177$ & $7.809\pm0.911$ \\
& SimGNN
  & $4.316\pm0.071$ & $4.496\pm0.060$ & $4.747\pm0.079$ & $4.145\pm0.051$ & $3.465\pm0.047$ & $5.212\pm0.360$ & $5.369\pm0.546$ \\
& H2MN
  & $3.105\pm0.043$ & $3.678\pm0.046$ & $3.413\pm0.053$ & $3.782\pm0.046$ & $3.396\pm0.046$ & $9.435\pm0.728$ & $5.848\pm0.611$ \\
& GraphSim
  & $5.266\pm0.081$ & $6.907\pm0.137$ & $5.370\pm0.092$ & $6.643\pm0.181$ & $3.928\pm0.053$ & $7.405\pm0.577$ & $6.815\pm0.628$ \\
& EGSC
  & $1.693\pm0.023$ & $2.157\pm0.027$ & $1.758\pm0.026$ & $2.371\pm0.025$ & $2.133\pm0.022$ & $3.957\pm0.365$ & $5.503\pm0.496$ \\
& \QDAGer (dir)
  & $1.382\pm0.018$ & $1.867\pm0.019$ & $2.038\pm0.028$ & $1.779\pm0.020$ & $1.717\pm0.017$ & $2.387\pm0.203$ & \secondcell{$2.050\pm0.228$} \\
& \QDAGer (HK+dir)
  & $1.421\pm0.017$ & $2.053\pm0.023$ & $1.949\pm0.028$ & $2.069\pm0.024$ & $1.910\pm0.019$ & $2.427\pm0.203$ & $3.031\pm0.402$ \\
& \QDAGer (RWs+dir)
  & $1.422\pm0.016$ & $2.104\pm0.023$ & $1.902\pm0.029$ & $2.052\pm0.022$ & $1.795\pm0.018$ & \thirdcell{$2.012\pm0.151$} & \thirdcell{$2.909\pm0.335$} \\
\cmidrule{2-9}
& \multicolumn{8}{l}{\textbf{Surrogate loss}} \\
\cmidrule{2-9}
& GRAPHEDX
  & $1.252\pm0.014$ & \secondcell{$1.603\pm0.016$} & \bestcell{$1.134\pm0.016$} & $1.804\pm0.019$ & $1.677\pm0.016$ & \bestcell{$1.478\pm0.118$} & \bestcell{$0.914\pm0.110$} \\
& \QDAGer (SH)
  & \bestcell{$1.142\pm0.012$} & \bestcell{$1.541\pm0.015$} & \secondcell{$1.194\pm0.017$} & \bestcell{$1.432\pm0.014$} & \bestcell{$1.439\pm0.013$} & $2.368\pm0.168$ & $12.469\pm4.291$ \\
& \QDAGer (HK+SH)
  & \thirdcell{$1.181\pm0.012$} & \thirdcell{$1.722\pm0.020$} & \thirdcell{$1.275\pm0.018$} & \secondcell{$1.508\pm0.016$} & \secondcell{$1.523\pm0.014$} & $3.037\pm0.216$ & $33.715\pm8.329$ \\
& \QDAGer (RWs+SH)
  & \secondcell{$1.154\pm0.013$} & $1.769\pm0.019$ & $1.311\pm0.018$ & \thirdcell{$1.722\pm0.019$} & \thirdcell{$1.585\pm0.015$} & \secondcell{$1.758\pm0.126$} & $85.321\pm31.820$ \\
\bottomrule
\end{tabular}%
}
\label{tab:res_table}
\end{table*}
\subsection{Discussion}
\label{sec:discussion}

This work proposes \QDAGer, a graph pair transformer, trained for GED prediction, that injects features derived from Ising Hamiltonian dynamics into the attention mechanism. The key contribution is not simply architectural, but the introduction of a tunable, dynamical structural signal (time-series of node occupations and connected two-point correlators) motivated by an analysis of how graph-indexed Hamiltonian dynamics interact with graph symmetries.

\subsubsection{Performance in the data-rich regime.}
A consistent trend in our results is that \QDAGer is most effective when the training set is sufficiently large to fit higher-capacity, structure-aware models. In particular, on datasets with on the order of $10^5$ graph pairs or more (YEAST, MUTAGENICITY, AIDS, OGBG-MOLHIV, and OGBG-MOLPCBA), \QDAGer predominantly achieves the best performance and improves over the strongest baselines reported under the same evaluation protocol (Table~1). In contrast, on smaller datasets such as LINUX ($1{,}774$ pairs) and OGBG-CODE2 ($3{,}629$ pairs), results are naturally more sensitive to optimization variance and hyperparameter choices. This behavior is expected for transformer-like models, and may be amplified in settings that combine expressive embedding backbones with permutation-invariant discrepancies and/or alignment modules. Overall, these observations suggest that the proposed dynamical inductive bias is most beneficial when generalization is not dominated by data scarcity.

\subsubsection{Ablation study.}
A central question is whether the gains stem from the proposed quantum-dynamical features or merely from the model capacity and training pipeline. To isolate the effect of the input signal, we perform an ablation in which the quantum dynamics used by \QDAGer are replaced with classical relative random-walk (RWs), as well as heat kernel (HK) features (more details are provided in Appendix~\ref{appx:expe_details_2}) and,  while keeping the architecture, alignment module, and loss computation unchanged. On the larger datasets, where training is more stable, \QDAGer with quantum-dynamics-based attention outperforms its HK and RWs counterparts (Table~1) in most cases for the blindfold loss, and in all cases under the surrogate-loss settings.
This provides empirical evidence that the quantum-derived observables yield a stronger structural bias for GED prediction than the considered classical alternative. From the graph-ML perspective, this result is appealing because it suggests that the dynamical features act as more than a generic positional encoding: they form a task-relevant representation that is sensitive to symmetries and substructure. Intuitively, time-evolving local observables and correlators can be viewed as a multi-scale probe of the input graph, where the non-linear dependence on the Hamiltonian parameters and the evolution time enriches the representational family beyond static encodings.

\subsubsection{Surrogate loss protocol and comparability.}
We emphasize that all surrogate-loss experiments are reported under a conservative protocol: we do not perform surrogate selection or specific tuning, using only a fixed surrogate objective (\texttt{XOR-DiffAlign})~\cite{JainNeuralGED}. This contrasts with settings where selecting multiple surrogate variants and hyperparameters via additional model selection loops can materially affect performance. Our choice keeps comparisons controlled, attributing improvements primarily to the proposed dynamical signal and architecture rather than extra degrees of freedom at the loss level. Consequently, the results in Table~1 reflect performance under a fixed, standard surrogate configuration.

\subsubsection{Practical constraints and limitations.}
In this work, quantum-dynamical features are computed via state-vector emulation, which provides access to expectation values (up to emulator error) but limits feasible graph sizes and makes feature generation expensive. Although tensor-network methods, and in particular matrix product state approaches, can provide better scaling for suitable graph structures and entanglement regimes, our choice was intentional: it allows us to assess the representational benefit of the proposed signal under idealized conditions before introducing additional confounders, such as finite-shot noise or hardware imperfections, in future work.

Moving toward hardware-relevant deployments raises two immediate questions: how performance degrades under finite-shot estimation of observables, and how feature generation can be scaled to much larger graphs. These considerations are orthogonal to the learning architecture itself, but they are essential for translating dynamical attention mechanisms into practical at-scale graph-ML pipelines.

The hardware-compatible form described by Eq.~\eqref{eq:phys_motivation} also comes with specific constraints. A direct neutral-atom implementation would require a geometric embedding of the input graph into an array geometry, and not every abstract graph admits such an embedding under the desired interaction pattern. In particular, unit-disk constraints and residual long-range van der Waals interactions become relevant when one asks whether Eq.~\eqref{eq:ham} can be implemented natively on a given device. These constraints do not affect the abstract graph-indexed Hamiltonian studied in this paper, but they are important for future hardware realizations.

On the cost side, feature generation is a one-time preprocessing step whose expense is amortized across the \(O(n^2)\) graph pairs constructed from \(n\) emulated graphs. Scaling beyond the state-vector regime will most likely require more efficient methods, other approximate emulators, hardware execution for compatible instances, or learned surrogates that regress the dynamical features directly.

\section{Conclusion}
In this work, we investigated graph-indexed Ising Hamiltonian dynamics as a symmetry-consistent source of structural signal for graph comparison. The main takeaway is that time-dependent local measurements, such as node occupations and two-body correlators, can serve as informative, tunable probes of graph structure. This perspective motivates using quantum dynamics not merely as an exotic feature generator, but as a principled mechanism to build representations that respond to structural roles and symmetries rather than to arbitrary node labels.

Building on this idea, we introduced \QDAGer, a graph-pair Transformer that injects these dynamical features directly into the attention mechanism through swap-consistent symmetric/antisymmetric constructions, and jointly updates node and pair representations. We evaluated \QDAGer on graph edit distance prediction under both direct permutation-invariant discrepancies and alignment-based surrogate losses. Across several GED benchmarks, we observed that the proposed quantum-dynamical signal provides a strong inductive bias, and ablations replacing it with classical random-walk or heat-kernel features generally reduce performance under the same training protocol.

From a practical standpoint, the current results rely on state-vector emulation to generate dynamical features, which constrains feasible graph sizes and makes preprocessing expensive. Important directions for future work include relaxing these constraints by using more scalable simulation methods, such as tensor networks, or by evaluating compatible instances on neutral-atom quantum processing units~\cite{Henriet2020quantumcomputing, dalyac2024graphalgorithms}. This will require understanding the model's robustness under finite-shot estimation noise, as well as clarifying which graph families benefit most from this dynamical bias in hardware-relevant regimes.. More broadly, our results suggest that incorporating physically motivated dynamics into attention can be a viable route to more structure-aware graph Transformers for graph comparison, and more generally to graph learning tasks.

\clearpage
\bibstyle{sn-mathphys}
\bibliography{biblio}
\appendix
\section{Extended Formulation}

\label{app:basis_details}
\subsection{Graph theory}
\label{app:basis_details_1}
A graph is denoted $\mathcal{G}=(\mathcal{V},\mathcal{E})$, where $\mathcal{V}$ is the vertex set with
$|\mathcal{V}|=N$, and $\mathcal{E}\subseteq \mathcal{V}\times \mathcal{V}$ is the (undirected) edge set.
We write $(u,v)\in\mathcal{E}$ to denote an edge between $u$ and $v$.

\subsubsection{Graph relabeling.}
Fix an ordering of $\mathcal{V}$ so that any permutation $g\in S_N$ can be identified with a bijection
$g:\mathcal{V}\to\mathcal{V}$. We denote the action of $g$ on vertices by $u\mapsto g\star u$.
This action extends to subsets $S\subseteq \mathcal{V}$ by
\begin{equation}
g\star S := \{g\star u:\; u\in S\},    
\end{equation}
and therefore defines an action of $S_N$ on the power set $2^{\mathcal{V}}$.

The relabeled graph $g\star \mathcal{G}$ is defined as the graph with vertex set $\mathcal{V}$ and edge set
\begin{equation}
g\star \mathcal{E} \;:=\; \{(g\star u,\; g\star v):\; (u,v)\in\mathcal{E}\}.
\end{equation}
\subsubsection{Graph isomorphism.}
Two graphs $\mathcal{G}_1=(\mathcal{V}_1,\mathcal{E}_1)$ and $\mathcal{G}_2=(\mathcal{V}_2,\mathcal{E}_2)$ with
$|\mathcal{V}_1|=|\mathcal{V}_2|=N$ are \emph{isomorphic}, denoted $\mathcal{G}_1\simeq \mathcal{G}_2$, if there exists
a bijection $\varphi:\mathcal{V}_1\to \mathcal{V}_2$ such that for all $u,v\in\mathcal{V}_1$,
\begin{equation}
(u,v)\in\mathcal{E}_1\;\; \Longleftrightarrow \;\; (\varphi(u),\varphi(v))\in\mathcal{E}_2.
\end{equation}

\subsubsection{Graph automorphism.}
An \emph{automorphism} of $\mathcal{G}$ is a permutation $g\in S_N$ such that $g\star \mathcal{G}=\mathcal{G}$.
The set of all automorphisms forms a group under composition, denoted $\Gamma\subseteq S_N$.

\subsubsection{Orbits on vertices and subsets.}
We consider the induced action of $\Gamma$ on $2^{\mathcal{V}}$ described above. For any subset
$S\subseteq \mathcal{V}$, we define its \emph{orbit} under $\Gamma$ as
\begin{equation}
O(S) \;:=\; \{g\star S:\; g\in\Gamma\}\;\subseteq\; 2^{\mathcal{V}}.
\end{equation}
In particular, for a vertex $u\in\mathcal{V}$, its orbit is $O(\{u\})$.

Moreover, for any $S\subseteq \mathcal{V}$, the induced subgraphs $\mathcal{G}[S]$ and $\mathcal{G}[g\star S]$
are isomorphic via 
\begin{equation}
g|_S:S\to g\star S.
\end{equation}
 Hence the orbit $O(S)$ collects subsets that define isomorphic induced subgraphs of $\mathcal{G}$.

\subsubsection{Connection to bitstrings.}
We identify subsets of vertices with computational basis states via the bijection
\begin{equation}
\iota:\{0,1\}^{N}\to 2^{\mathcal{V}},\qquad 
\iota(\boldsymbol{b})\;:=\;\{u\in\mathcal{V}:\; b_u=1\},
\end{equation}
and conversely $\boldsymbol{b}=\mathbf{1}_S$ for a subset $S\subseteq\mathcal{V}$, where $\mathbf{1}_S\in\{0,1\}^N$ denotes the indicator (characteristic) bitstring of $S$, i.e. $(\mathbf{1}_S)_u=1$ if and only if $u\in S$. The action of $\Gamma$ on $\mathcal{V}$ induces an action on bitstrings by permutation of
coordinates:
\begin{equation}
(g\star \boldsymbol{b})_u \;:=\; b_{g^{-1}\star u},\qquad g\in\Gamma,
\end{equation}
so that the mapping is equivariant:
\begin{equation}
\iota(g\star \boldsymbol{b})= g\star \iota(\boldsymbol{b}).
\end{equation}
Therefore, the orbit of a subset $S$ under $\Gamma$ coincides (via $\iota^{-1}$) with the orbit of its
indicator bitstring $\mathbf{1}_S$:
\begin{equation}
O(S)=\{\iota(g\star \mathbf{1}_S):\; g\in\Gamma\}.
\end{equation}
In particular, all bitstrings in the same orbit are mapped to induced subgraphs that are isomorphic
via the restriction of an automorphism of $\mathcal{G}$.
\subsubsection{Direct product of groups.}
Given two groups $\Gamma_1$ and $\Gamma_2$, their \emph{direct product} is
\begin{equation}
\Gamma_1\times \Gamma_2 := \{(g_1,g_2):\; g_1\in\Gamma_1,\ g_2\in\Gamma_2\},
\end{equation}
with componentwise composition $(g_1,g_2)\circ (h_1,h_2) := (g_1\circ h_1,\ g_2\circ h_2)$.

\subsection{Coefficients of symmetric bitstrings}
\label{app:basis_details_2}
In the standard computational basis $\mathcal{B :=}\{\ket{\boldsymbol{b}}_i\}_{i = 1}^{2^N}$, where each bitstring is a binary chain of length $N$. When the initial state $\ket{\psi_0}$ belongs to the trivial symmetry sector $\mathcal H_0$ (\textit{cf.} section~\ref{subsec:graph_automorphisms_wavefunction_symmetries}), we can expand $\ket{\psi_{\mathcal{G}}(t)}$ as:
\begin{equation}
\ket{\psi_{\mathcal{G}}(t)} = \sum_{\boldsymbol{b} \in \{0,1\}^N} C_{\mathcal{G}}^{\boldsymbol{b}}(t) \ket{\boldsymbol{b}}
\end{equation}
If $\ket{\boldsymbol{b}}$ and $\ket{\boldsymbol{b}'}$ are in the same orbit (i.e., $\ket{\boldsymbol{b}'} = \hat{U}_g \ket{\boldsymbol{b}}$ for some $g \in \Gamma$), then:
\begin{equation}
\begin{aligned}
C_{\mathcal{G}}^{\boldsymbol{b}}(t) 
& \defeq \bra{\boldsymbol{b}} e^{-it\hat{H}_{\mathcal{G}}} \ket{\psi_0} 
= \bra{\boldsymbol{b}} e^{-it\hat{H}_{\mathcal{G}}} \hat{U}_g^\dagger \ket{\psi_0} \\
&\qquad \text{(since } \hat{U}_g \ket{\psi_0} = \ket{\psi_0} \; \forall g \in \Gamma)\\
&= \bra{\boldsymbol{b}} \hat{U}_g^\dagger e^{-it\hat{H}_{\mathcal{G}}} \ket{\psi_0} 
= \bra{g \star \boldsymbol{b}} e^{-it\hat{H}_{\mathcal{G}}} \ket{\psi_0} \\
&\qquad \text{(since } \big[\hat{U}_g,\hat{H}_{\mathcal G} \big] = 0)\\
&= C_{\mathcal{G}}^{\boldsymbol{b}'}(t)
\end{aligned}
\end{equation}
This confirms that bitstrings belonging to the same orbit of $\Gamma$ share the same time-evolving amplitude. 

\section{Proof of Theorem~\ref{thm:main}}
\label{app:proof_main}
\begin{proof}
Let \(\widehat{P}_0\) be the orthogonal projector onto the trivial symmetry sector \(\mathcal H_0\), and define the projected observable difference
\begin{equation}
	\widehat\Delta_0
	:=
	\widehat{P}_0\big(\widehat O_S-\widehat O_{S'}\big)\widehat{P}_0.
\end{equation}
Since \(\widehat{P}_0|\psi_0\rangle=|\psi_0\rangle\) and \(\mathcal H_0\) is invariant under \(\widehat H_\mathcal{G}\), the signal depends only on this compressed observable. Equivalently, in the trivial symmetry sector with the corresponding Hamiltonian block $\widehat H_{\mathcal G}|_{\mathcal H_0}$ denoted $\widehat{H}_{0,\mathcal{G}}$,
\begin{equation}
\label{eq:time_dep_sig}
	f_{S,S'}(t)
	=
	\big\langle \psi_0\big|
	e^{it\widehat H_{0,\mathcal{G}}}
	\widehat\Delta_0
	e^{-it\widehat H_{0,\mathcal{G}}}
	\big|\psi_0\big\rangle.
\end{equation}

We first establish that \(\widehat\Delta_0\) is not the zero operator.

\begin{lemma}[Distinct projected observables]
\label{lem:distinct_projected_observables}
If \(S,S'\subseteq \mathcal{V}\) belong to two distinct orbits in \(2^\mathcal{V}\), then
\begin{equation}
	\widehat{P}_0\big(\widehat O_S-\widehat O_{S'}\big)\widehat{P}_0 \neq 0.
\end{equation}
\end{lemma}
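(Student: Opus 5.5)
Our plan is to exhibit one explicit vector in $\mathcal H_0$ on which the compressed operator acts nontrivially, using orbit-indicator states. For each $\Gamma$-orbit $\mathcal O\subseteq 2^{\mathcal V}$ we define the normalized orbit state
\begin{equation}
|\mathcal O\rangle := |\mathcal O|^{-1/2}\sum_{T\in\mathcal O}|\mathbf 1_T\rangle .
\end{equation}
Since each $U_g$ permutes computational basis states within orbits, every $|\mathcal O\rangle$ lies in $\mathcal H_0$. These states form an orthonormal basis of $\mathcal H_0$, consistent with the coefficient equalities of Appendix~\ref{app:basis_details_2}. The key observation is that $\widehat O_S=\prod_{u\in S}\widehat n_u$ is diagonal in the computational basis, with
\begin{equation}
\widehat O_S|\mathbf 1_T\rangle=\mathbb 1[S\subseteq T]\,|\mathbf 1_T\rangle .
\end{equation}

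It therefore suffices to compute the matrix element $\langle\mathcal O|\widehat O_S-\widehat O_{S'}|\mathcal O\rangle$ for a well-chosen orbit. Because $\widehat P_0|\mathcal O\rangle=|\mathcal O\rangle$, this matrix element equals the corresponding matrix element of $\widehat P_0(\widehat O_S-\widehat O_{S'})\widehat P_0$. By diagonality,
\begin{equation}
\langle\mathcal O|\widehat O_S|\mathcal O\rangle=\frac{|\{T\in\mathcal O:S\subseteq T\}|}{|\mathcal O|}.
\end{equation}

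We then choose the orbit $\mathcal O=O(S)$. For $T=g\star S$ we have $S\subseteq g\star S$, and since the two sets have equal cardinality this forces $T=S$. So the $\widehat O_S$ term equals $1/|O(S)|>0$. For the $\widehat O_{S'}$ term we count the $T\in O(S)$ with $S'\subseteq T$, splitting into two cases.

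\emph{Case $|S'|\ge |S|$.} Here $S'\subseteq T$ with $|T|=|S|$ forces $T=S'$, which would put $S'\in O(S)$, contradicting the distinct-orbit hypothesis. The count is therefore zero, and the matrix element equals $1/|O(S)|\neq0$.

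\emph{Case $|S'|<|S|$.} We swap the roles of $S$ and $S'$ and use the orbit $O(S')$. The same argument gives $\langle O(S')|\widehat O_{S'}|O(S')\rangle=1/|O(S')|$ and $\langle O(S')|\widehat O_S|O(S')\rangle=0$, because $|S|>|S'|$ means no element of $O(S')$ can contain $S$. The matrix element is then $-1/|O(S')|\neq 0$.

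In either case a nonzero diagonal matrix element of the compressed operator is exhibited, hence $\widehat P_0(\widehat O_S-\widehat O_{S'})\widehat P_0\neq0$.

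The only delicate point is the case split. A naive choice of a single orbit can fail when one set is contained in a translate of the other, since then both terms may be nonzero and could in principle cancel. Choosing the orbit of the larger-or-equal set, and using the cardinality comparison, avoids this. The empty-set convention ($\widehat O_\emptyset=\mathbb I$) is covered automatically, since the empty set is a singleton orbit and falls under the second case when compared with any nonempty $S$.
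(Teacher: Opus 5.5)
Your proof is correct and follows essentially the same route as the paper: the same normalized orbit states in $\mathcal H_0$, the same diagonal computation giving $1/|\mathcal O_S|$, and the same cardinality argument showing that one of the two cross-expectations must vanish. The only difference is presentational. You split explicitly into the cases $|S'|\ge|S|$ and $|S'|<|S|$, whereas the paper argues by contradiction that both cross terms cannot be strictly positive at once.
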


\begin{proof}[Proof of Lemma~\ref{lem:distinct_projected_observables}]
For \(X\subseteq \mathcal{V}\), denote by \(\mathcal O_X\) its orbit under the induced action of \(\Gamma\) on \(2^\mathcal{V}\), and define the normalized orbit state
\begin{equation}
	|\Psi_X\rangle
	:=
	\frac{1}{\sqrt{|\mathcal O_X|}}
	\sum_{Y\in \mathcal O_X}|Y\rangle,
\end{equation}
where \(|Y\rangle\) is the computational basis vector associated with the indicator vector of \(Y\). By construction, \(|\Psi_X\rangle\in\mathcal H_0\), and therefore \(\widehat{P}_0|\Psi_X\rangle=|\Psi_X\rangle\).

For every \(Y\subseteq \mathcal{V}\), the observable \(\widehat O_S=\prod_{u\in S}\widehat n_u\) satisfies
$
	\langle Y|\widehat O_S|Y\rangle
	=
	\mathbf 1_{\{S\subseteq Y\}}.
$
Hence
\begin{align}
	\langle \Psi_X|\widehat O_S|\Psi_X\rangle
	&=
	\frac{1}{|\mathcal O_X|}
	\sum_{Y\in\mathcal O_X}
	\langle Y|\widehat O_S|Y\rangle \\
	&=
	\frac{|\{Y\in\mathcal O_X : S\subseteq Y\}|}{|\mathcal O_X|}.
\end{align}
Thus, $\langle \Psi_X|\widehat O_S|\Psi_X\rangle$ is the fraction of configurations in the orbit $\mathcal O_X$ that contain $S$. Taking \(X=S\), all elements of \(\mathcal O_S\) have cardinality \(|S|\). Thus \(S\subseteq Y\), with \(Y\in\mathcal O_S\), holds if and only if \(Y=S\). Consequently,
$
	\langle \Psi_S|\widehat O_S|\Psi_S\rangle
	=
	\frac{1}{|\mathcal O_S|}.
$

We now examine the two cross-expectations. If
$
	\langle \Psi_S|\widehat O_{S'}|\Psi_S\rangle>0,
$
then there exists \(Y\in\mathcal O_S\) such that \(S'\subseteq Y\), and therefore \(|S'|\leq |S|\). Similarly, if
$
	\langle \Psi_{S'}|\widehat O_S|\Psi_{S'}\rangle>0,
$
then there exists \(Y'\in\mathcal O_{S'}\) such that \(S\subseteq Y'\), and therefore \(|S|\leq |S'|\). Hence both cross-expectations can be strictly positive only if \(|S|=|S'|\). In that case, the inclusion \(S'\subseteq Y\in\mathcal O_S\) forces \(S'=Y\), so \(S'\in\mathcal O_S\), contradicting the assumption that \(S\) and \(S'\) belong to distinct orbits.

Therefore, at least one of the two cross-expectations is zero. If
$
	\langle \Psi_S|\widehat O_{S'}|\Psi_S\rangle=0,
$
then
$
	\langle \Psi_S|\widehat\Delta_0|\Psi_S\rangle
	=
	\frac{1}{|\mathcal O_S|}
	\neq 0.
$
Otherwise,
$
	\langle \Psi_{S'}|\widehat O_S|\Psi_{S'}\rangle=0,
$
and then
$
	\langle \Psi_{S'}|\widehat\Delta_0|\Psi_{S'}\rangle
	=
	-\frac{1}{|\mathcal O_{S'}|}
	\neq 0.
$
In either case, \(\widehat\Delta_0\neq 0\).
\end{proof}

We now use the spectral decomposition of \(\widehat H_{0,\mathcal{G}}\). Write
$
	\widehat H_{0,\mathcal{G}}
	=
	\sum_{E\in\Sigma(\widehat H_{0,\mathcal{G}})} E \widehat{P}_E,
$
where $\Sigma(\widehat H_{0,\mathcal{G}})$ is the set of eigenvalues of $\widehat H_{0,\mathcal{G}}$ and \(\widehat{P}_E\) is the orthogonal projector onto the eigenspace associated with each of these eigenvalues \(E\). Let
\begin{equation}
	\mathcal B
	:=
	\{E-E' : E,E'\in\Sigma(\widehat H_{0,\mathcal{G}})\}
\end{equation}
be the finite set of energy gaps. For each energy gap $\omega$, we define the $\omega$-resolved transition component of the perturbation $\widehat\Delta_0$ with respect to the spectrum of $\widehat H_{0,\mathcal{G}}$ as
\begin{equation}
\widehat\Delta_\omega
:=
\sum_{\substack{E,E'\in\Sigma(\widehat H_{0,\mathcal{G}})\\ E-E'=\omega}}
\widehat{P}_E\widehat\Delta_0\widehat{P}_{E'}.
\end{equation}
This operator extracts the part of $\widehat\Delta_0$ that couples eigenspaces of $\widehat H_{0,\mathcal{G}}$ whose energies differ by $\omega$, i.e., transitions from energy $E'$ to energy $E=E'+\omega$.
By completeness of the spectral projectors, we have
$
	\widehat\Delta_0
	=
	\sum_{\omega\in\mathcal B}\widehat\Delta_\omega.
$
Since \(\widehat\Delta_0\neq 0\), there exists \(\omega_\star\in\mathcal B\) such that
$
	\widehat\Delta_{\omega_\star}\neq 0.
$

The time-dependent signal of Eq.~\eqref{eq:time_dep_sig} can now be expanded as
\begin{equation}
\begin{aligned}
	f_{S,S'}(t)
	&=
	\sum_{E,E'\in\Sigma(\widehat H_{0,\mathcal{G}})}
	e^{it(E-E')}
	\\
	&\qquad\times
	\big\langle \psi_0\big|
	\widehat{P}_E\widehat\Delta_0\widehat{P}_{E'}
	\big|\psi_0\big\rangle \\
	&=
	\sum_{\omega\in\mathcal B}
	e^{it\omega}
	\big\langle \psi_0\big|
	\widehat\Delta_\omega
	\big|\psi_0\big\rangle.
\end{aligned}
\end{equation}
Thus \(f_{S,S'}\) is a finite trigonometric polynomial, and hence a real-analytic function of \(t\). Since the energy gaps in \(\mathcal B\) are distinct, the functions
$
	t\mapsto e^{it\omega},
	\;\; \omega\in\mathcal B,
$
are linearly independent over \(\mathbb C\). Therefore, if \(f_{S,S'}\equiv 0\), then all its energy gaps coefficients vanish:
$
	\big\langle \psi_0\big|
	\widehat\Delta_\omega
	\big|\psi_0\big\rangle
	=0,
	\;\;
	\forall \omega\in\mathcal B.
$
In particular, we get
$
	\big\langle \psi_0\big|
	\widehat\Delta_{\omega_\star}
	\big|\psi_0\big\rangle
	=0.
$

We now show that this last condition holds only on a Haar-null subset of \(\mathbb S(\mathcal H_0)\). Let \(M=\dim(\mathcal H_0)\), fix an orthonormal basis of \(\mathcal H_0\), and identify \(\psi_0\) with a vector \(z\in\mathbb C^M\). \newline Define :
$
	Q(z)
	:=
	z^\dagger \widehat\Delta_{\omega_\star}z.
$
This is a complex-valued homogeneous polynomial of degree two in the \(2M\) real coordinates of \(z\).

We claim that \(Q\) is not identically zero. To see this, consider the sesquilinear form
\begin{equation}
	B(v,w)
	:=
	v^\dagger \widehat\Delta_{\omega_\star}w.
\end{equation}
For complex vector spaces, the polarization identity gives
\begin{multline}
	B(v,w)
	=
	\frac14
	\Big(
		Q(v+w)-Q(v-w)
	\\
		+iQ(v-iw)-iQ(v+iw)
	\Big).
\end{multline}
If \(Q(z)=0\) for every \(z\in\mathbb C^M\), then the right-hand side vanishes for all \(v,w\in\mathbb C^M\). Hence
$
	v^\dagger \widehat\Delta_{\omega_\star}w=0
	\;\;
	\forall v,w\in\mathbb C^M,
$
which implies \(\widehat\Delta_{\omega_\star}=0\), contradicting the choice of \(\omega_\star\). Therefore \(Q\) is not identically zero. Equivalently, at least one of the real polynomials \(\operatorname{Re}Q\) and \(\operatorname{Im}Q\) is not identically zero.

It follows that the set
$
	Z_{\omega_\star}
	:=
	\{z\in\mathbb C^M : Q(z)=0\}
$
is a proper algebraic zero set (that is, the zero set of a polynomial which is not identically zero, and therefore not the whole ambient space). Since \(Q\) is homogeneous, if its restriction to the unit sphere were identically zero, then \(Q\) would be identically zero on all of \(\mathbb C^M\), which we have just ruled out. Thus the restriction of at least one of \(\operatorname{Re}Q\) or \(\operatorname{Im}Q\) to \(\mathbb S(\mathcal H_0)\) is a nonzero real-analytic function. Its zero set has Haar measure zero. Consequently,
\begin{equation}
	\mu_{\mathrm{Haar}}
	\bigl(
		Z_{\omega_\star}\cap \mathbb S(\mathcal H_0)
	\bigr)
	=0.
\end{equation}
Hence, for \(\mu_{\mathrm{Haar}}\)-almost every \(\psi_0\in\mathbb S(\mathcal H_0)\),
$
	\big\langle \psi_0\big|
	\widehat\Delta_{\omega_\star}
	\big|\psi_0\big\rangle
	\neq 0,
$
and therefore \(f_{S,S'}\not\equiv 0\).

Finally, a nonzero real-analytic function on \(\mathbb R\) has isolated zeros. In particular,
\begin{equation}
	\lambda_1\bigl(\{t\in\mathbb R : f_{S,S'}(t)=0\}\bigr)=0.
\end{equation}
This completes the proof.
\end{proof}
\begin{remark}[Exceptional initial states for the Transverse-Field Ising Hamiltonian]
\label{rem:pathological_initial_states}

As pointed out in section~\ref{subsec:empirical_validation}, for a deterministic initialization, we can no longer assume Haar-randomness. Therefore, the relevant issue concerns which part of $\mathcal H_0$ is actually explored by the dynamics. Given a \emph{fixed} $|\psi_0\rangle\in\mathcal H_0$, define its dynamically accessible subspace
\begin{equation}
\label{eq:def_accessible_subspace}
\mathcal K(\psi_0)
:=
\text{span}\{\,e^{-it\widehat H_{\mathcal G}}|\psi_0\rangle:\ t\in\mathbb R\,\}
\subseteq \mathcal H_0 .
\end{equation}
This subspace contains the full trajectory of $|\psi_0\rangle$ and is preserved by the time evolution. Therefore, if $\widehat{P}_{\mathcal K}$ denotes the orthogonal projector onto $\mathcal K(\psi_0)$, then inserting $\widehat{P}_{\mathcal K}$ on both sides of the observable does not change the signal:
\begin{multline}
f_{S,S'}(t)
=
\\
\big\langle \psi_0\big|
e^{it\widehat H_{\mathcal G}}
\big[\widehat{P}_{\mathcal K}(\widehat O_S-\widehat O_{S'})\widehat{P}_{\mathcal K}\big]
e^{-it\widehat H_{\mathcal G}}
\big|\psi_0\big\rangle
\end{multline}
Thus, for this particular initialization, distinguishability is governed by the compressed observable
\begin{equation}
\widehat\Delta_{\mathcal K}
:=
\widehat{P}_{\mathcal K}(\widehat O_S-\widehat O_{S'})\widehat{P}_{\mathcal K}.
\end{equation}
The possible loss of distinguishability is therefore an accessibility issue: the state probes only the restriction of $\widehat O_S-\widehat O_{S'}$ to $\mathcal K(\psi_0)$. 

This perspective also makes clear the role of the transverse-field term of the Hamiltonian. For the implementation-relevant initialization $|\psi_0\rangle=|0\rangle^{\otimes N}$, when $\Omega=0$, the Ising Hamiltonian is diagonal in the computational basis, so $|0\rangle^{\otimes N}$ is stationary and
\begin{equation}
\mathcal K(\psi_0)=\text{span}\{|\psi_0\rangle\}.
\end{equation}
Consequently, for nonempty $S,S'\subseteq\mathcal V$, the compressed signal satisfies $f_{S,S'}\equiv 0$. By contrast, in the regime where $\Omega\neq 0$, the transverse-field term couples $|0\rangle^{\otimes N}$ to one-excitation states and can generate a nontrivial Krylov subspace inside $\mathcal H_0$. Section~\ref{subsec:empirical_validation} provides an instance in this regime as used in the experiments.
\end{remark}

\section{Proof of Corollary~\ref{cor:corolary_main}}
\label{appx:cor}
\begin{proof}
Let
\begin{equation}
\mathcal G_{1,2}
=
\mathcal G_1\cup\mathcal G_2,
\qquad
\Gamma_{1,2}
=
\operatorname{Aut}(\mathcal G_{1,2}).
\end{equation}
Since \(\mathcal G_1\) and \(\mathcal G_2\) are connected and have disjoint vertex sets, the connected components of \(\mathcal G_{1,2}\) are precisely \(\mathcal V_1\) and \(\mathcal V_2\). We first prove the reverse implication. Assume that there exists an orbit of \(\Gamma_{1,2}\) containing vertices from both components. Then there are \(u\in\mathcal V_1\), \(v\in\mathcal V_2\), and \(g\in\Gamma_{1,2}\) such that
$g\star u=v.$
\newline 
Because \(g\) is an automorphism, it preserves adjacency and hence preserves connectedness. Therefore, $g(\mathcal V_1)$ is a connected component of $\mathcal G_{1,2}$. Since \(u\in\mathcal V_1\) and \(g\star u=v\in\mathcal V_2\), the image \(g(\mathcal V_1)\) intersects \(\mathcal V_2\). As \(\mathcal V_2\) is a connected component, this forces $g(\mathcal V_1)=\mathcal V_2.$
\newline The restriction $ g|_{\mathcal V_1}:\mathcal V_1\to\mathcal V_2 $
is therefore a bijection. Since \(g\) preserves adjacency and non-adjacency in \(\mathcal G_{1,2}\), this
restriction is an isomorphism from \(\mathcal G_1\) to \(\mathcal G_2\). Hence $\mathcal G_1\simeq \mathcal G_2.$
\newline We now prove the forward implication. Assume that $ \mathcal G_1\simeq\mathcal G_2,$
and let $ \phi:\mathcal V_1\to\mathcal V_2 $ be a graph isomorphism. Define a permutation \(\sigma\) of \(\mathcal V_{1,2}:=\mathcal V_1\sqcup\mathcal V_2\) by
\begin{equation}
\sigma(w) =
\begin{cases}
\phi(w), & w\in\mathcal V_1,\\
\phi^{-1}(w), & w\in\mathcal V_2.
\end{cases}
\end{equation}
We claim that \(\sigma\in\Gamma_{1,2}\). Indeed, if \((u,w)\in\mathcal E_1\), then
\begin{equation}
(\sigma(u),\sigma(w))=(\phi(u),\phi(w))\in\mathcal E_2
\end{equation}
because \(\phi\) is an isomorphism. Conversely, if \((u,w)\in\mathcal E_2\), then
\begin{equation}
(\sigma(u),\sigma(w))=(\phi^{-1}(u),\phi^{-1}(w))\in\mathcal E_1.
\end{equation}
Moreover, there are no edges between \(\mathcal V_1\) and \(\mathcal V_2\) in the disjoint union \(\mathcal G_{1,2}\), and \(\sigma\) maps each component onto the other. Hence \(\sigma\) preserves the edge set of \(\mathcal G_{1,2}\), so $\sigma\in\Gamma_{1,2}.$
Now take any \(u\in\mathcal V_1\). Since \(\sigma\in\Gamma_{1,2}\), the orbit of \(u\) under \(\Gamma_{1,2}\) contains both
$
u\in\mathcal V_1
\text{ and }
\sigma(u)=\phi(u)\in\mathcal V_2.
$
Thus there exists an orbit of \(\Gamma_{1,2}\) containing at least one vertex from each connected component. This proves the equivalence.
\end{proof}
\section{Proof of Proposition~\ref{thm:stat_test}}
\label{app:appendix_stat_test}
\begin{proof} Fix a time \(t_0>0\) such that
\begin{equation}
\gap(t_0) = \min_{u\in\mathcal V_1,\;v\in\mathcal V_2} \left| \langle \widehat n_u(t_0)\rangle - \langle \widehat n_v(t_0)\rangle \right| >0.
\end{equation}

The positivity of this quantity holds for Haar-almost every product initial state and for almost every \(t_0>0\), as explained in Proposition~\ref{thm:stat_test}.

For each vertex \(w\in\mathcal V_1\sqcup\mathcal V_2\), let
\begin{equation}
p_w(t_0):=\langle \widehat n_w(t_0)\rangle.
\end{equation}

A projective measurement of \(\widehat n_w\) at time \(t_0\) produces a Bernoulli random variable with mean \(p_w(t_0)\). Let \(X_{w,m}(t_0)\in\{0,1\}\) denote the \(m\)-th independent measurement sample of \(\widehat n_w\), and define the empirical estimator
\begin{equation}
\bar n_w(t_0)
:=
\frac{1}{M}\sum_{m=1}^{M}X_{w,m}(t_0).
\end{equation}

By Hoeffding's inequality, for every \(\epsilon>0\),
\begin{equation}
\Pr\!\left(
\left|\bar n_w(t_0)-p_w(t_0)\right|\geq \epsilon
\right)
\leq
2e^{-2M\epsilon^2}.
\end{equation}

Taking a union bound over all $N=|\mathcal V_1|+|\mathcal V_2|$ vertices gives
\begin{multline}
\Pr\!\left(
\left|\bar n_w(t_0)-p_w(t_0)\right|<\epsilon
\right)
\geq
\\
1-2Ne^{-2M\epsilon^2} \;\; \forall w\in\mathcal V_1\sqcup\mathcal V_2.
\end{multline}
Thus, choosing $\epsilon=\sqrt{\frac{1}{2M}\log\!\left(\frac{2N}{\alpha}\right)}$
ensures that all occupation estimates are simultaneously within \(\epsilon\) of their expectations with probability at least \(1-\alpha\). On this event, define for each vertex \(w\in\mathcal V_1\sqcup\mathcal V_2\) the confidence interval
\begin{equation}
J_w
:=
\left[
\bar n_w(t_0)-\epsilon,\,
\bar n_w(t_0)+\epsilon
\right]
\cap [0,1].
\end{equation}

Since \(|\bar n_w(t_0)-p_w(t_0)|<\epsilon\), we have
$ p_w(t_0)\in J_w \text{ and } J_w \subseteq \left[ p_w(t_0)-2\epsilon,\, p_w(t_0)+2\epsilon \right].$
Now consider any cross-component pair \(u\in\mathcal V_1\) and \(v\in\mathcal V_2\). By definition of
$\gap(t_0)$ : $ |p_u(t_0)-p_v(t_0)| \geq \gap(t_0).$ Assume without loss of generality that \(p_u(t_0)>p_v(t_0)\). \newline Then
\begin{equation}
\inf J_u
\geq
p_u(t_0)-2\epsilon \;\;,
\;\;
\sup J_v
\leq
p_v(t_0)+2\epsilon.
\end{equation}

Therefore:
\begin{equation}
\inf J_u-\sup J_v
\geq
p_u(t_0)-p_v(t_0)-4\epsilon
\geq
\gap(t_0)-4\epsilon.
\end{equation}

Hence, if
$
\epsilon<\frac{\gap(t_0)}{4},
$
then \(J_u\cap J_v=\emptyset\) for every cross-component pair
\(u\in\mathcal V_1\), \(v\in\mathcal V_2\). In other words, with probability at least
\(1-\alpha\), the empirical confidence intervals certify that no vertex occupation value in
\(\mathcal V_1\) coincides with any vertex occupation value in \(\mathcal V_2\).

Using
$
\epsilon
=
\sqrt{
\frac{1}{2M}
\log\!\left(\frac{2N}{\alpha}\right)
},
$
the condition \(\epsilon<\gap(t_0)/4\) is satisfied whenever
$
M
>
\frac{8}{\gap(t_0)^2}
\log\!\left(\frac{2N}{\alpha}\right).
$
This proves the claimed shot-complexity scaling
$
M
=
O\!\left(
\gap(t_0)^{-2}
\log\!\left(\frac{N}{\alpha}\right)
\right).
$
\end{proof}
\section{Proof of Theorem~\ref{thm:expressiveness}}
\label{app:thm_expressiveness}
\begin{proof}
Throughout the proof, the Hamiltonian used to generate the dynamical features is fixed. All genericity statements are therefore with respect to the choice of initial states and sampling times only, and not with respect to the Hamiltonian parameters.

Let $\mathcal D=\{\mathcal G_1,\dots,\mathcal G_n\}$
be a finite dataset of graphs, padded to a common size \(N\). We write
\begin{equation}
\mathcal P:=\{(k,l):1\leq k<l\leq N\}
\end{equation}
for the set of unordered node pairs. For each graph \(\mathcal G=(\mathcal V,\mathcal E)\), the quantum-dynamical preprocessing produces node time-series features
\begin{equation}
\mathbf x_i(\mathcal G)\in\mathbb R^T,
\qquad
[\mathbf x_i(\mathcal G)]_t
=
\langle \widehat n_i(t)\rangle_{\mathcal G},
\end{equation}
and pair time-series features $\mathbf e_{ij}(\mathcal G)\in\mathbb R^T$:
\begin{multline}
[\mathbf e_{ij}(\mathcal G)]_t
=
\langle \widehat n_i\widehat n_j(t)\rangle_{\mathcal G}
-
\langle \widehat n_i(t)\rangle_{\mathcal G}
\langle \widehat n_j(t)\rangle_{\mathcal G},
\\ i<j.
\end{multline}
We stack these features into tensors
\begin{equation}
X_{\mathrm{in}}(\mathcal G)\in\mathbb R^{N\times T},
\qquad
[X_{\mathrm{in}}(\mathcal G)]_{i,:}
=
\mathbf x_i(\mathcal G),
\end{equation}
and
\begin{equation}
E_{\mathrm{in}}(\mathcal G)\in\mathbb R^{|\mathcal P|\times T},
\qquad
[E_{\mathrm{in}}(\mathcal G)]_{(i,j),:}
=
\mathbf e_{ij}(\mathcal G).
\end{equation}

We next use the orbit-distinguishability consequence of Theorem~\ref{thm:main} together with Corollary~\ref{cor:corolary_main}. Namely, for any non-isomorphic graph pair
$
\mathcal G_k\not\simeq \mathcal G_l
$
in \(\mathcal D\), and for any cross-component node pair
$
i\in\mathcal V_k,
\;
j\in\mathcal V_l,
$
the corresponding one-body occupation trajectories are generically distinct, meaning 
$
t
\longmapsto
\langle \widehat n_i(t)\rangle_{\mathcal G_k}
-
\langle \widehat n_j(t)\rangle_{\mathcal G_l}
$
is not identically zero for Haar-almost every choice of initial states in the relevant trivial symmetry sectors, and its zero set has Lebesgue measure zero.

Since \(\mathcal D\) is finite and each graph has finitely many vertices, a finite-intersection argument gives that, for Haar-almost every choice of initial states and for almost every choice of sampling times, this separation holds simultaneously for all non-isomorphic graph pairs \((k,l)\) and all cross-component node pairs \((i,j)\). Hence, whenever
$
\mathcal G_k\not\simeq \mathcal G_l,
$
we have
\begin{equation}
\mathbf x_i(\mathcal G_k)
\neq
\mathbf x_j(\mathcal G_l)
\qquad
\forall i\in\mathcal V_k,\; j\in\mathcal V_l.
\end{equation}
Consequently, no relabeling of the padded vertex set can make the node-level dynamical feature tensors coincide, and therefore
\begin{equation}
\min_{\pi\in S_N}
\left\|
X_{\mathrm{in}}(\mathcal G_k)
-
P_\pi X_{\mathrm{in}}(\mathcal G_l)
\right\|_F^2
>0.
\end{equation}

We now show that our model contains a parameter setting that preserves these separating features. Assume that the output of $L$ stacked \QDAGer layers is (for either of the graphs from the input pair) : 
\begin{equation}
\mathcal E_\theta(\mathcal G)
=
\bigl(X_\theta(\mathcal G),R_\theta(\mathcal G)\bigr),
\end{equation}
with
\begin{equation}
X_\theta(\mathcal G)\in\mathbb R^{N\times D},
\qquad
R_\theta(\mathcal G)\in\mathbb R^{|\mathcal P|\times D}.
\end{equation}
Without loss of generality, we choose the embedding dimension so that $D=T$, the dimension of the dynamical time series introduced in section~\ref{subsec:model_archi}. Let
\(X_{\mathrm{in}}(\mathcal G)\in\mathbb R^{N\times T}\) denote the matrix whose $i$-th row is the node feature
\(\mathbf x_i\in\mathbb R^T\), and let the initial pair features be the vectors
\(\mathbf e_{ij}\in\mathbb R^T\) defined in the paragraph on quantum dynamical features. We claim that the class of \QDAGer layers covers, as a special parameter setting, the map that leaves all node features unchanged and discards the pair features. More precisely, there exists \(\theta_0\) such that, for every \(\mathcal G\in\mathcal D\),
\begin{equation}
X_{\theta_0}(\mathcal G)
=
X_{\mathrm{in}}(\mathcal G),
\qquad
R_{\theta_0}(\mathcal G)
=
0.
\end{equation}
Indeed, consider one layer and use the notation of section~\ref{subsec:model_archi}. Set the value projection and the node--pair aggregation map in \texttt{NodeUpdate} to zero,
\begin{equation}
W_V=0,
\qquad
W_{NE}=0.
\end{equation}
Then, independently of the attention coefficients \(\alpha_i^{(g)}\) and \(\alpha_{ij}^{(g)}\), the node update term defined in section~\ref{subsec:model_archi} vanishes:
\begin{equation}
\mathbf x_i^{\prime(g)}
=
\alpha_i^{(g)} W_V\mathbf x_i^{(g)}
+
\sum_{j\neq i}\alpha_{ij}^{(g)} W_{NE}\mathbf e_{ij}^{\prime(g)}
=0.
\end{equation}
Thus the residual part of the node block contributes exactly
\(\mathbf x_i^{(g)}+\mathbf x_i^{\prime(g)}=\mathbf x_i^{(g)}\). Next, set the pair branch to the zero map by choosing, in the notation of \texttt{PairUpdate},
\begin{equation}
W_Q=W_K=W_{Ew}=W_{Eb}=0
\end{equation}
and take any optional pair residual or output projection to be zero. With this choice, the pair branch does not contribute information to the output pair embedding, so the final pair embedding can be set to zero. The query and key projections are therefore harmless: after this choice, the attention logits may still be formally defined, but their values cannot affect the node update as both quantities multiplied by the attention weights in Eq.~\eqref{eq:node_update} are already zero.

It remains only to specify the node feed-forward block. Since the theorem concerns representational capacity, we may take the normalization block to be absent or inactive in this construction, so that it does not modify the preserved coordinates. We then choose the node feed-forward block to be the identity map on \(\mathbb R^T\). A two-layer ReLU network realizes the identity exactly: for any \(\mathbf x\in\mathbb R^T\), set
\begin{align}
W_1 &= \begin{bmatrix} I_T \\ -I_T \end{bmatrix}, & b_1 &= 0, \\
W_2 &= \begin{bmatrix} I_T & -I_T \end{bmatrix}, & b_2 &= 0.
\end{align}
Then
\begin{equation}
W_2\operatorname{ReLU}(W_1\mathbf x)+b_2
=
\operatorname{ReLU}(\mathbf x)-\operatorname{ReLU}(-\mathbf x)
=\mathbf x,
\end{equation}
coordinatewise, since \(\max\{x_t,0\}-\max\{-x_t,0\}=x_t\) for every coordinate \(t=1,\dots,T\). Applying the same parameter setting in each of the \(L\) stacked layers proves the claimed preservation of node-level dynamical features and annihilation of pair-level features.

For this parameter setting, define the embedding collision distance
\begin{multline}
\varepsilon_{\theta_0}(\mathcal G_k,\mathcal G_l)
:=
\min_{\pi\in S_N}
\Bigg(
\left\|
X_{\theta_0}(\mathcal G_k)
-
P_\pi X_{\theta_0}(\mathcal G_l)
\right\|_F^2
\\
+
\left\|
R_{\theta_0}(\mathcal G_k)
-
S_\pi R_{\theta_0}(\mathcal G_l)
\right\|_F^2
\Bigg)^{1/2},
\end{multline}
where \(S_\pi\) is the permutation matrix induced by \(\pi\) on \(\mathcal P\). Since \(R_{\theta_0}(\mathcal G)=0\) for every \(\mathcal G\), this reduces to
$
\varepsilon_{\theta_0}(\mathcal G_k,\mathcal G_l)
=
\min_{\pi\in S_N}
\left\|
X_{\mathrm{in}}(\mathcal G_k)
-
P_\pi X_{\mathrm{in}}(\mathcal G_l)
\right\|_F.
$
Therefore, for every non-isomorphic pair \(\mathcal G_k\not\simeq \mathcal G_l\), the feature-separation property gives
$
\varepsilon_{\theta_0}(\mathcal G_k,\mathcal G_l)>0.
$
Thus the parameter setting \(\theta_0\) yields distinct \QDAGer embeddings, up to node relabeling, for all non-isomorphic graph pairs in \(\mathcal D\). This proves the claim.
\end{proof}

\section{Experiments Details}
\label{appx:expe_details}
\subsection{Efficient evaluation of the permutation-invariant separation index}
\label{appx:expe_details_1}
The separation index $\eta^{(1)}_{ij}$ (\textit{cf.} Eq.~\eqref{eq:RMS}) between two graphs $\mathcal G_i$ and
$\mathcal G_j$ of a family involves a minimization over the symmetric group $S_{N}$ of all relabelings of the $N$ vertices. This is never carried out by an explicit search over its $N!$ elements. Because the objective is additive over vertices (the total cost is a sum of contributions, each depending only on a single vertex $u$ and its image $\pi(u)$) the minimization is exactly a linear assignment (minimum-cost bipartite matching) problem. Collecting the per-vertex time-averaged mismatches $m_{uv}$ of Eq.~\eqref{eq:cost_matrix} into a cost matrix, the optimal relabeling is the permutation minimizing $\sum_u m_{u\pi(u)}$, which is solved exactly by the Hungarian (Kuhn--Munkres) algorithm in $\mathcal{O}(N^3)$ time. We use the \texttt{linear\_sum\_assignment} routine of SciPy, which implements a modified Jonker--Volgenant algorithm with the same $\mathcal{O}(N^3)$ worst-case complexity. The apparent $N!$ complexity of optimizing over $S_{N}$ therefore collapses to a low-degree polynomial cost, so that even the larger families considered here are handled in a reasonably short time.
\subsection{Classical features for ablation}
\label{appx:expe_details_2}

To isolate the representational contribution of the quantum-dynamical signal, we conduct an ablation study where the quantum dynamics are replaced by standard classical structural descriptors: relative Random-Walks (RWs) and Heat Kernel (HK) signatures. For a graph $\mathcal{G}$ with adjacency matrix $A$ and diagonal degree matrix $D$ (where $D_{ii} = \sum_j A_{ij}$), the $t$-step random walk transition probability matrix is given by:

\begin{equation}
    P^{(t)} = (D^{-1}A)^t
\end{equation}
Similarly, the Heat Kernel features are governed by the unnormalized graph Laplacian $L = D - A$. The heat kernel matrix at diffusion time $t$ is defined by the matrix exponential, which can be written its using its eigenpairs $(\lambda_m, u_m)$:
\begin{equation}
    K_{heat}^{(t)} = e^{-tL} = \sum_{k=1}^N e^{-t\lambda_k} u_k u_k^\top
\end{equation}
To integrate these classical signals into the \QDAGer architecture without modifying the downstream attention or loss mechanisms, we evaluate these matrices over a set of $T$ discrete steps (either discrete walk steps or continuous diffusion times). Denoting the resulting sequence of dense matrices as $M^{(t)}$ (where $M \in \{P, K_{heat}\}$), we assign the classical auto-correlations (the diagonal entries) to the node features $\mathbf{x}_i \in \mathbb{R}^T$, such that $[\mathbf{x}_i]_t = M^{(t)}_{i,i}$. The cross-correlations (the off-diagonal entries) replace the connected two-point quantum correlators and are mapped to the pair features $\mathbf{e}_{ij} \in \mathbb{R}^T$, where $[\mathbf{e}_{ij}]_t = M^{(t)}_{i,j}$ for $i \neq j$.
\subsection{Model hyperparameters}
\label{appx:expe_details_3}
 We put in Table~\ref{tab:qdager_equal_core} the configurations for the different models and datasets for which results are shown in Table~\ref{tab:res_table}

\begin{table*}[htbp]
\centering
\scriptsize
\setlength{\tabcolsep}{3pt}
\caption{Each entry is written as dir/SH, where the left values describe the configurations of the models trained with the blindfold (dir) loss, and the right value those from the Sinkhorn (SH) surrogate loss. A single value is shown wherever the configuration is the same between the two variants. A dash (\texttt{-}) means that the row is not applicable for that variant (for example, dir has no Sinkhorn parameters). The most systematic differences are the normalization and regularization choices (dir: batch norm + nonzero encoder/transformer dropout, SH : layer norm + zero encoder/transformer dropout). T/F denotes booleans.}
\label{tab:qdager_equal_core}
\resizebox{\textwidth}{!}{%
\begin{tabular}{lccccccc}
\toprule
Row & AIDS & Yeast & Mutag & MolHIV & MolPCBA & Code2 & Linux \\
\midrule
Size (num pairs) & 182820 & 220500 & 117166 & 220500 & 220500 & 3629 & 1774 \\
N (max size) & 20 & 20 & 20 & 20 & 20 & 20 & 20 \\
Batch size & 1024/512 & 1024/512 & 1024/512 & 1024/512 & 1024/512 & 256 & 256 \\
Enc in dim & 100 & 100 & 100 & 100 & 100 & 32 & 32 \\
Enc hid dim & 64 & 64 & 64 & 64 & 64 & 16 & 16 \\
Enc out dim & 32 & 32 & 32 & 32 & 32 & 16 & 16 \\
Enc num layers & 2 & 2 & 2 & 2 & 2 & 2 & 2 \\
Enc dropout & 0.1/0 & 0.1/0 & 0.1/0 & 0.1/0 & 0.1/0 & 0.1/0 & 0.1/0 \\
Tr num layers & 5 & 5 & 4/5 & 5 & 5 & 4 & 4 \\
Tr layer dim & 32 & 32 & 32 & 32 & 32 & 16 & 16 \\
Num heads & 4 & 4 & 2 & 4 & 4 & 2 & 2 \\
Tr dropout & 0.1/0 & 0.1/0 & 0.1/0 & 0.1/0 & 0.1/0 & 0.1/0 & 0.1/0 \\
Attn dropout & 0.1 & 0.1 & 0.1 & 0.1 & 0.1 & 0.1 & 0.1 \\
Layer norm & F/T & F/T & F/T & F/T & F/T & F/T & F/T \\
Batch norm & T/F & T/F & T/F & T/F & T/F & T/F & T/F \\
Learning rate  & 1e-3 & 1e-3 & 1e-3 & 1e-3 & 1e-3 & 1e-3 & 1e-3 \\
Weight decay & 1e-5 & 1e-5 & 1e-5 & 1e-5 & 1e-5 & 1e-5 & 1e-5 \\
Num epochs & 1500 & 1500 & 1500 & 1500 & 1500 & 1500 & 1500 \\
Warmup & 50 & 50 & 50 & 50 & 50 & 50 & 50 \\
SH use & F/T & F/T & F/T & F/T & F/T & F/T & F/T \\
SH hid dim & -/16 & -/16 & -/32 & -/16 & -/16 & -/16 & -/16 \\
SH out dim & -/16 & -/16 & -/16 & -/16 & -/16 & -/8 & -/8 \\
SH tau & -/0.1 & -/0.1 & -/0.05 & -/0.1 & -/0.1 & -/0.1 & -/0.1 \\
SH $N_{\text{iters}}$ & -/20 & -/20 & -/20 & -/20 & -/20 & -/20 & -/20 \\
\bottomrule
\end{tabular}%
}
\end{table*}

\clearpage

\end{document}